\definecolor{darkblue}{rgb}{0,0,0.5}
\newtheorem{theorem}{Theorem}
\newtheorem{definition}{Definition}
\newtheorem{remark}{Remark}
\def\be{\begin{equation}}
\def\ee{\end{equation}}
\def\ba{\begin{eqnarray}}
\def\ea{\end{eqnarray}}
\begin{document}
\title{Overcoming Dimensional Factorization Limits in Discrete Diffusion Models through Quantum Joint Distribution Learning}

\author{Chuangtao Chen}
\email{chuangtaochen@gmail.com}
\affiliation{Faculty of Innovation Engineering, Macau University of Science and Technology, Macao 999078, China
}

\author{Qinglin Zhao\footnote{Corresponding author: qlzhao@must.edu.mo}}
\affiliation{Faculty of Innovation Engineering, Macau University of Science and Technology, Macao 999078, China
}

\author{MengChu Zhou}
\affiliation{Department of Electrical and Computer Engineering, New Jersey Institute of Technology, Newark, NJ 07102 USA}

\author{Dusit Niyato}
\affiliation{
	College of Computing and Data Science, Nanyang
	Technological University, 639798 Singapore
} 
\author{Zhimin He}
\affiliation{
School of Electronic and Information Engineering, Foshan University, Foshan 528000, China
}
\author{Haozhen Situ}
\affiliation{
College of Mathematics and Informatics, South China Agricultural University, Guangzhou 510642, China
}

\begin{abstract}
	Discrete diffusion models represent a significant advance in generative modeling, demonstrating remarkable success in synthesizing complex, high-quality discrete data. 
	However, to avoid exponential computational costs, they typically rely on calculating per-dimension transition probabilities when learning high-dimensional distributions. In this study, we rigorously prove that this approach leads to a worst-case linear scaling of Kullback-Leibler (KL) divergence with data dimension.
	To address this, we propose a Quantum Discrete Denoising Diffusion Probabilistic Model (QD3PM), which enables joint probability learning through diffusion and denoising in exponentially large Hilbert spaces, offering a theoretical pathway to faithfully capture the true joint distribution.
	By deriving posterior states through quantum Bayes' theorem, similar to the crucial role of posterior probabilities in classical diffusion models, and by learning the joint probability, we establish a solid theoretical foundation for quantum-enhanced diffusion models. 
	For denoising, we design a quantum circuit that utilizes temporal information for parameter sharing and incorporates learnable classical-data-controlled rotations for encoding.
    Exploiting joint distribution learning, our approach enables single-step sampling from pure noise, eliminating iterative requirements of existing models. 
    Simulations demonstrate the proposed model's superior accuracy in modeling complex distributions compared to factorization methods. Hence, this paper establishes a new theoretical paradigm in generative models by leveraging the quantum advantage in joint distribution learning.
\end{abstract}
\maketitle

\twocolumngrid

\section{Introduction}
\label{introduction}

Generative learning, a key area in deep learning, focuses on training models to learn the probability distribution of a given dataset \cite{Kingma2013,Lecun2015,Goodfellow2020,Bond2021,chen2023mogan}. Once trained, the model is capable of generating new samples that follow this distribution. Recently, denoising diffusion probabilistic models  \cite{Sohl2015,Ho2020} have garnered significant attention due to their stable training processes and impressive generation results. Several important variants \cite{ho2022video,rombach2022high,croitoru2023diffusion} have been developed. Discrete data is a crucial type in data science, and several discrete diffusion models \cite{austin2021structured,hoogeboom2021argmax,campbell2022continuous} have been proposed to handle them. Discrete diffusion models, which add noise to discrete data to transform the original data distribution into a stationary distribution (e.g., a uniform distribution), have achieved remarkable performance in text generation \cite{hoogeboom2021argmax,he2023diffusionbert}, image generation \cite{austin2021structured,Hu_2022_CVPR}, video generation \cite{ref217}, and graph generation \cite{kong2023autoregressive,chen2024d4explainer}.

\begin{figure}[t]
	\centering
	\includegraphics[width=0.8\linewidth]{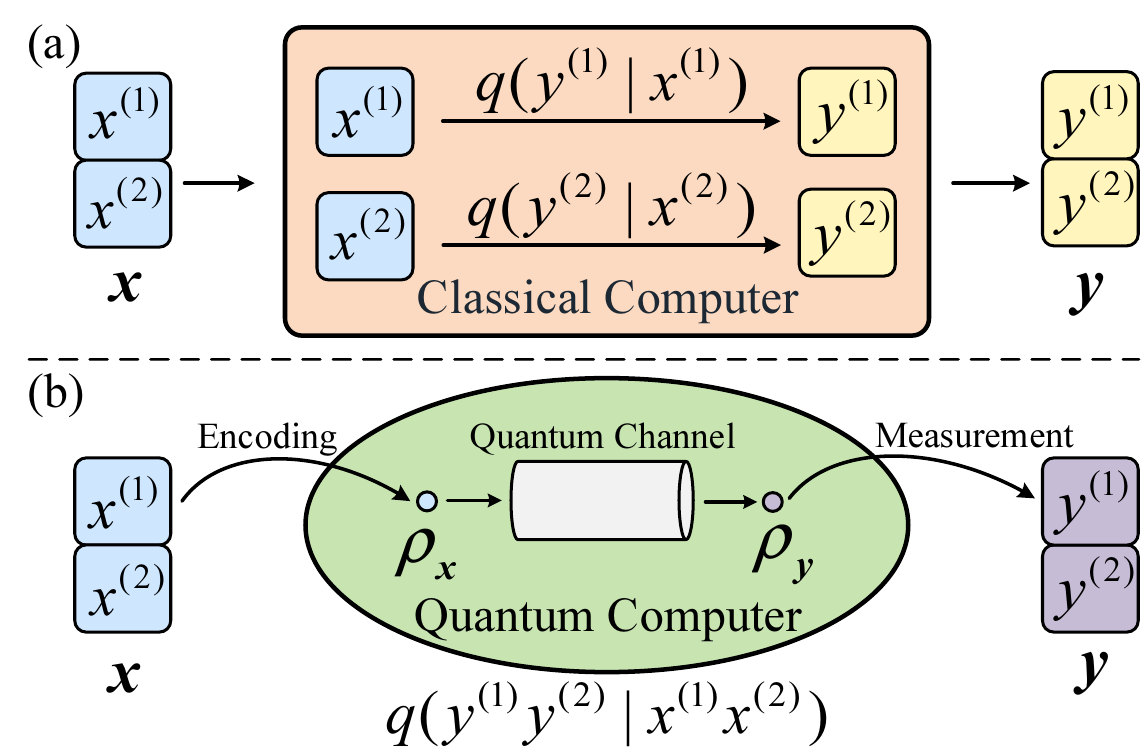}
	\caption{Comparison of classical and quantum approaches for data processing. (a) The classical model independently computes transition probabilities for each data dimension. (b) The quantum model uses quantum states and channels to model joint distributions.}
	\label{fig:core idea}
\end{figure}

While these existing approaches demonstrate significant progress, quantum computing presents a new paradigm for generative learning through enhanced computational capabilities and novel algorithms. Quantum generative models \cite{tian2023recent} leverage quantum computing \cite{nielsen2010quantum} to explore these opportunities and to introduce innovative frameworks including quantum generative adversarial networks \cite{lloyd2018quantum, dallaire2018quantum,chakrabarti2019quantum,zoufal2019quantum,situ2020quantum,huang2021quantum,niu2022entangling,silver2023mosaiq,kim2024hamiltonian,ma2025quantum}, quantum denoising diffusion probabilistic models (QDDPMs) \cite{parigi2024quantum,zhang2024generative,chen2024quantum,de2024quantuma,de2024quantumb,kwun2024mixed} and others \cite{amin2018quantum,liu2018differentiable,khoshaman2018quantum, benedetti2019generative,zoufal2021variational}. 
These quantum generative models open new avenues for data generation and offer the potential to enhance classical generative models by exploiting quantum characteristics such as superposition and correlations \cite{gao2022enhancing}.

\textbf{Motivations.} 
Despite these mentioned advancements, existing discrete diffusion models face fundamental limitations. As shown in \cref{fig:core idea}(a), their diffusion process independently computes transition probabilities for each dimension. Similarly, when computing the ground-truth posterior for training, they calculate the posterior distribution for each dimension separately.
While these strategies avoid the exponential costs of joint distribution modeling, they inevitably sacrifice inter-dimensional correlations, resulting in two significant limitations. Primarily, they degrade distribution fitting quality. Liu et al. \cite{liu2025discrete} qualitatively show that these methods introduce an irreducible term in the Evidence Lower Bound (ELBO), fundamentally limiting the model's ability to match the target distribution. Furthermore, these approaches necessitate an extensive number of denoising iterations to maintain sampling quality \cite{hayakawa2024distillation}, thus significantly reducing computational efficiency.

These limitations become particularly pronounced when confronting high-dimensional discrete data, where the curse of dimensionality renders joint distribution modeling intractable. 
This creates a moment for quantum computing, whose unique, non-classical characteristics make it naturally suited to address the challenges in discrete diffusion models, particularly through two key features: 1) its exponentially large Hilbert space allows for the direct learning of joint distributions over high-dimensional discrete data, and 2) quantum measurements provide a native mechanism for generating discrete outcomes, a natural fit for discrete data generation.
To realize an efficient fully-quantum discrete diffusion model, we have to perform four critical steps: 1) designing a quantum-based diffusion process for joint distributions, 2) deriving posterior distributions suited to this quantum framework, as the existing approach operates on classical computers and calculating posteriors independently for each dimension, whereas the quantum framework computes posterior distributions on quantum computers, 3) designing quantum circuit architecture for effective denoising, and 4) establishing a one-step inference framework for efficient generating clean data, bypassing multiple denoising iterations. Consequently, we can well establish the foundation for quantum-enhanced discrete diffusion models.

\textbf{Contributions.}
To address these, we propose a Quantum Discrete Denoising Diffusion Probabilistic Model (QD3PM), as the first quantum diffusion framework that fundamentally overcomes dimension-wise factorization constraints through joint distribution modeling in Hilbert space. 
As shown in \cref{fig:core idea}(b), our method implements quantum diffusion by: 1) encoding data into quantum states, 2) evolving through quantum channels while preserving inter-dimensional correlations, and 3) sampling new instances from the joint probability distribution via quantum measurement.
Specifically, we make the following novel contributions to the field of discrete diffusion models:
                     
\textbf{1. Theoretical characterization of existing discrete diffusion limitations.} We theoretically prove that the worst-case fitting error of existing discrete diffusion models for $N$-dimensional data with $K$ possible values per dimension scales as $\mathcal{O}(N\log K)$, due to the inherent dimension-factorization approach.
Remarkably, our quantum-enhanced QD3PM circumvents this limitation through joint distribution modeling in exponentially scaled Hilbert space, achieving theoretically perfect distribution matching.

\textbf{2. Joint distribution modeling in Hilbert space with quantum Bayes' theorem for accurately capturing inter-dimensional correlations.} We develop a method to leverage the exponentially large Hilbert space for storing and computing joint probability distributions in the diffusion process. Using quantum Bayes' theorem \cite{leifer2013towards}, we derive a general formulation for the posterior distribution state whose measurement distribution corresponds to the ground-truth posterior. We further provide a specific derivation for the posterior state when the diffusion process is modeled as a depolarizing channel, along with its quantum circuit implementation.

\textbf{3. Temporal-aware denoising architecture for minimizing parameter overhead.} We carefully design a circuit architecture for the QD3PM's denoising process. This circuit incorporates timestep encoding for parameter sharing across different timesteps, preventing the need for step-specific parameter sets across the denoising process, thereby minimizing parameter overhead. Additionally, it integrates learnable rotation gates to encode classical information into quantum states, ensuring effective denoising even from a uniform initial distribution. Therefore, this parameter-efficient circuit overcomes the fundamental invariance of the maximally mixed state to unitary evolution.

\textbf{4. One-step inference framework for achieving faster generation.}
We develop an approach that enables our model to generate high-quality samples in a single step after training by leveraging fundamental properties of quantum operations and measurements. This breakthrough allows our model to bypass the numerous denoising iterations required by existing discrete diffusion models, dramatically improving sampling efficiency compared to classical methods.

Our work goes beyond replacing existing quantum generative models; it integrates the diffusion model's framework with quantum computing. We tackle key challenges of existing discrete diffusion models, i.e., dimension disaster and correlation breakdown while laying the theoretical groundwork for quantum diffusion models and showcasing the potential of quantum computing to elevate existing machine learning. 

\section{Results}
\subsection{Preliminary}
\label{section: preliminary}

Discrete Denoising Diffusion Probabilistic Model (D3PM) \cite{austin2021structured} is a representative discrete diffusion model. Its forward diffusion process transforms an $N$-dimensional data \(\boldsymbol{x}_0 = \{x_0^{(1)}, x_0^{(2)}, \ldots, x_0^{(N)}\}\) into noisy data \(\boldsymbol{x}_T = \{x_T^{(1)}, x_T^{(2)}, \ldots, x_T^{(N)}\}\) through discrete timesteps $t = 1, 2, \ldots, T$, with $\boldsymbol{x}_T$ following a complete noise distribution. 
In D3PM, the transition probability matrix $\boldsymbol{Q}_t$ acts independently on each dimension of the data (e.g., pixels in an image or tokens in a sequence), avoiding exponential increases in storage and computation but breaking inter-dimensional correlations. Each scalar random variable $x_t^{(i)}$ has $K$ categories and is represented by a one-hot vector $\boldsymbol{x}_t^{(i)}$. The transition probability at timestep $t$ is:
\begin{equation}
	q(\boldsymbol{x}_t^{(i)} | \boldsymbol{x}_{t-1}^{(i)}) = \text{Cat}(\boldsymbol{x}_t^{(i)}; \boldsymbol{p} = \boldsymbol{x}_{t-1}^{(i)} \boldsymbol{Q}_t),
\end{equation}
where $\text{Cat}(\boldsymbol{x}_t^{(i)}; \boldsymbol{p})$ is a categorical distribution over the one-hot encoded vector  $ \boldsymbol{x} $ with probabilities $\boldsymbol{p}  $, and $\boldsymbol{x}_{t-1} \boldsymbol{Q}_t$ represents the standard matrix-vector multiplication. The probability transition matrix $\boldsymbol{Q}_t$ has dimensions $K \times K$. It can be constructed as
$\boldsymbol{Q}_t = \alpha_t \boldsymbol{I} + \frac{1 - \alpha_t}{K} \mathbbm{1}\mathbbm{1}^T$,
which defines a uniform transition matrix \cite{hoogeboom2021argmax},
where $\boldsymbol{I}$ is the identity matrix, $\alpha_t \in [0,1]$, and $\mathbbm{1}$ is a $K$-dimensional column vector with all entries equal to one. The relationship between $\boldsymbol{x}_0^{(i)}$ and $\boldsymbol{x}_t^{(i)}$ at timestep $t$ is given as:
\begin{equation}
	q(\boldsymbol{x}_t^{(i)} \mid \boldsymbol{x}_0^{(i)}) = \text{Cat}\left(\boldsymbol{x}_t^{(i)}; \boldsymbol{p} = \boldsymbol{x}_0^{(i)} \overline{\boldsymbol{Q}}_t \right),
\end{equation}
where $\overline{\boldsymbol{Q}}_t = \boldsymbol{Q}_1 \boldsymbol{Q}_2 \cdots \boldsymbol{Q}_t$.
The denoising process in discrete diffusion models is parameterized by $p_{\boldsymbol{\theta}}(\boldsymbol{x}_{t-1}^{(i)} | \boldsymbol{x}_t^{(i)})$. In order to fit the data distribution $ q(\boldsymbol{x}_0) $, the model is trained by minimizing the variational lower bound objective function overall dimensions:
\begin{equation} 
	\begin{split} 
		\label{eq:lower bound}
		\mathcal{L}_{vb} &= \mathbb{E}_{q(\boldsymbol{x}_0)} \Bigg[ 
		\underbrace{D_{KL}\big[q(\boldsymbol{x}_T | \boldsymbol{x}_0) \| p(\boldsymbol{x}_T)\big]}_{\mathcal{L}_T} \\
		&+ \sum_{t=2}^T \mathbb{E}_{q(\boldsymbol{x}_t | \boldsymbol{x}_0)} [ 
		\underbrace{D_{KL}\big[q(\boldsymbol{x}_{t-1} | \boldsymbol{x}_t, \boldsymbol{x}_0) \| p_\theta(\boldsymbol{x}_{t-1} | \boldsymbol{x}_t)\big]}_{\mathcal{L}_{t-1}} ] \\
		& - \underbrace{\mathbb{E}_{q(\boldsymbol{x}_1 | \boldsymbol{x}_0)}\big[\log p_\theta(\boldsymbol{x}_0 | \boldsymbol{x}_1)\big]}_{\mathcal{L}_0}
		\Bigg],
	\end{split} 
\end{equation} 
where the ground-truth posterior distribution of the $ i $-th dimension is obtained via Bayes' theorem:
\begin{equation} 
	\begin{split} 
		&q(\boldsymbol{x}_{t-1}^{(i)} | \boldsymbol{x}_t^{(i)}, \boldsymbol{x}_0^{(i)})  
		= \frac{q(\boldsymbol{x}_t^{(i)} | \boldsymbol{x}_{t-1}^{(i)}, \boldsymbol{x}_0^{(i)}) q(\boldsymbol{x}_{t-1}^{(i)} | \boldsymbol{x}_0^{(i)})}{q(\boldsymbol{x}_t^{(i)} | \boldsymbol{x}_0^{(i)})}\\
		&= \text{Cat} \left( \boldsymbol{x}_{t-1}^{(i)}; p = \frac{\boldsymbol{x}_t^{(i)} \boldsymbol{Q}_t^\top \odot \boldsymbol{x}_0^{(i)} \overline{\boldsymbol{Q}}_{t-1}}{\boldsymbol{x}_0^{(i)} \overline{\boldsymbol{Q}}_t \boldsymbol{x}_t^{(i)\top}} \right).
	\end{split} 
\end{equation}

\subsection{Drawbacks of the Factorization Approach}
\label{section: Drawbacks of Factorization}

Existing discrete diffusion models compute transition probability or ground-truth posterior in each dimension. This section quantitatively analyzes the impact of this factorization approach on fitting error. Intuitively, as the correlation among the dimensions of data increases, the performance of the discrete diffusion model decreases. 
Therefore, we consider a dataset \(\mathcal{D}_{\text{FC}}\), where each sample exhibits the strongest correlation among its dimensions, and derive the theoretical fitting errors for both existing and quantum models, thereby demonstrating the advantages of modeling joint distributions with the latter.
The fully correlated dataset \(\mathcal{D}_{\text{FC}}\) is mathematically defined in the following.
\begin{definition}
\label{Definition: Fully Correlated Dataset}
	(Fully Correlated Dataset $\mathcal{D}_{\text{FC}}$) Let \(\boldsymbol{x} = (x^{(1)}, x^{(2)}, \dots, x^{(N)})\) represent an \(N\)-dimensional discrete data point, where each component \(x^{(i)}\) can take \(K\) distinct values. The data exhibits full correlation across all dimensions if the following conditions hold:

	1. The reference dimension \(x^{(1)}\) takes exactly \(K\) distinct values, i.e., \(x^{(1)} \in \{0, 1, \dots, K-1\}\), and each value of \(x^{(1)}\) uniquely determines a corresponding \(\boldsymbol{x}\); and
	
	2. There exist bijective functions $f_2, f_3, \dots$, and $f_N$ such that:
	$ 	x^{(j)} = f_j(x^{(1)}) \quad \text{for } j = 2, \dots, N. $
	Thus, the dataset \(\mathcal{D}_{\text{FC}}\) is restricted to exactly \(K\) distinct samples:
	\begin{equation*}
		\begin{split}
\mathcal{D}_{\text{FC}} = \{&\boldsymbol{x} = (x^{(1)}, x^{(2)}, \dots, x^{(N)})\mid \\  &x^{(j)} = f_j(x^{(1)}),  x^{(1)} \in \{0, 1, \dots, K-1\}\}.
		\end{split}	
	\end{equation*}
	Each sample in \(\mathcal{D}_{\text{FC}}\) is associated with a unique value of \(x^{(1)}\), ensuring no two samples share the same value. 
	We assume that the dataset follows a uniform distribution, $ {P}(\boldsymbol{x}) = \frac{1}{K}, \boldsymbol{x} \in \mathcal{D}_{\text{FC}}$.
\end{definition}
We can derive the following theorem to reveal the results of comparing existing and quantum discrete diffusion models:
\begin{theorem}
  \label{thm:KL_bounds}  
  Let \(\boldsymbol{x} = (x^{(1)},\ldots,x^{(N)})\) be an \(N\)-dimensional discrete random variable drawn from \(\mathcal{D}_{\text{FC}}\) defined in \cref{Definition: Fully Correlated Dataset}, where each \(x^{(i)}\) takes \(K\) possible values, and let \(P(\boldsymbol{x})\) be its distribution.  
  Define the QD3PM joint distribution \( Q_{q}(\boldsymbol{x}) \). Then:
  \begin{equation}
  	\label{eq:kl with quantum}
    D_{\text{KL}}(P \parallel Q_q) = 0.
  \end{equation}
  For existing models with factorized distributions 
$
    Q_{c}(\boldsymbol{x}) := \prod_{i=1}^N Q_{c}(x^{(i)}), 
    \quad Q_{c}(x^{(i)}) = P(x^{(i)}),
$
  it holds that:
  \begin{equation}
  	\label{eq: kl with classical}
    D_{\text{KL}}(P \parallel Q_{c}) = (N-1)\log K.
  \end{equation}
\end{theorem}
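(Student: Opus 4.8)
The plan is to treat the two assertions by separate mechanisms: the classical bound \cref{eq: kl with classical} is a short direct computation, whereas the quantum identity \cref{eq:kl with quantum} is a structural statement about representational capacity.

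First I would compute the coordinate marginals of $P$. Since each $f_j$ in \cref{Definition: Fully Correlated Dataset} is a bijection and $x^{(1)}$ is uniform on $\{0,\ldots,K-1\}$, every coordinate is pushed forward to a uniform law, so $P(x^{(i)}=k)=1/K$ for all $k$ and all $i$. Consequently the factorized surrogate is $Q_c(\boldsymbol{x})=\prod_{i=1}^N P(x^{(i)})=K^{-N}$, i.e.\ uniform over the entire product space of size $K^N$, even though $P$ is supported on only the $K$ samples of $\mathcal{D}_{\text{FC}}$, each carrying mass $1/K$. This mismatch---a mass concentrated on $K$ configurations versus a uniform sheet over $K^N$---is the source of the fitting error.

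Next I would substitute into the definition of KL divergence. Because $P$ vanishes off its support, only the $K$ correlated samples contribute, and each term is identical:
\begin{equation*}
D_{\text{KL}}(P \parallel Q_c) = \sum_{\boldsymbol{x}\in\mathcal{D}_{\text{FC}}} \frac{1}{K}\,\log\frac{1/K}{1/K^{N}} = K\cdot\frac{1}{K}\log K^{N-1} = (N-1)\log K,
\end{equation*}
which establishes \cref{eq: kl with classical}.

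For the quantum claim the argument is conceptual rather than computational. Because the QD3PM encodes and evolves the joint distribution directly in a Hilbert space of dimension $K^N$, rather than factorizing across coordinates, its measurement distribution $Q_q$ can be arranged to coincide exactly with $P$; once $Q_q=P$ the divergence vanishes termwise, giving \cref{eq:kl with quantum}. I expect the genuine obstacle to lie here: the calculation is trivial, but one must justify that the quantum diffusion and denoising construction described earlier in fact admits an exact joint representation with no irreducible factorization gap, in contrast to the classical ELBO. Thus the real content of the theorem is the juxtaposition---an exact joint model against the uniform product to which any marginal-matching factorization inevitably collapses on $\mathcal{D}_{\text{FC}}$.
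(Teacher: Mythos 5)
Your proof is correct, and for the classical bound it takes a more direct route than the paper. The paper first rewrites $D_{\text{KL}}(P\parallel Q_c)$ as the total correlation $\sum_{i} H(P(x^{(i)})) - H(P(\boldsymbol{x}))$ (valid whenever $Q_c$ matches the marginals of $P$), and only then evaluates the two entropy terms using the same two facts you use: the bijections $f_j$ make every marginal uniform ($H(P(x^{(i)}))=\log K$), and $P$ is uniform on $K$ points ($H(P(\boldsymbol{x}))=\log K$), giving $N\log K - \log K$. You instead observe immediately that $Q_c$ is the uniform distribution $K^{-N}$ on the full product space and substitute into the KL definition over the $K$-point support, which is shorter and entirely elementary. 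What the paper's decomposition buys is the general identity that the factorization error equals the multi-information of $P$, which underlies its surrounding discussion of $(N-1)\log K$ as the \emph{worst-case} error over all datasets and why weaker correlations give smaller errors; your computation is specific to $\mathcal{D}_{\text{FC}}$ but proves the stated equality just as well. On the quantum claim, your treatment matches the paper's: both assert $Q_q=P$ on the grounds that the joint distribution is modeled without factorization, and your caveat that this presumes the quantum architecture can realize an exact joint representation is a fair reading of what the paper itself leaves at the level of "ideally, this model can accurately fit the target distribution."
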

\begin{proof}
	The proof is provided in \cref{section: proof of KL}. It begins by expressing the KL divergence in terms of entropy and mutual information, utilizing the definition of KL divergence and the factorized form of $ Q_c(\boldsymbol{x}) $. The fitting error of \( D_{\text{KL}}(P \parallel Q_{c}) \) is derived from the maximum marginal entropy, as \( P(x^{(i)}) \) is uniform, and the minimum joint entropy, as \( P(\boldsymbol{x}) \) is concentrated on \( K \) points, resulting in the error \( (N-1)\log K \).
\end{proof}

\cref{thm:KL_bounds} shows that QD3PM perfectly fits maximally correlated data by modeling the joint distribution, while factorized methods face a worst-case fitting error upper bound of $\mathcal{O}(N \log K)$. The actual fitting error usually remains below this bound. Stronger dimensional correlation pushes the fitting error nearer to the maximum error level of $(N-1)\log K$. This reveals a fundamental weakness of existing discrete diffusion models for high-dimensional correlated data, emphasizing the need for quantum computing to address this.

\subsection{Quantum Discrete Denoising Diffusion Probabilistic Model (QD3PM)}
\label{section: QD3PM}

\subsubsection{General Framework}

The framework of QD3PM is similar to its classical counterpart \cite{austin2021structured}, as shown in \cref{fig:framework}. At each timestep \( t \), the diffusion process encodes the classical data \( \boldsymbol{x}_{t-1} \) into a quantum state \( |\boldsymbol{x}_{t-1}\rangle \), uses a quantum channel to diffuse the joint distribution of all data dimensions, and measures the polluted state to obtain the next timestep's data \( \boldsymbol{x}_t \). The denoising process is implemented by a trainable quantum circuit, which predicts the denoised distribution based on noisy data \( \boldsymbol{x}_t \) and obtains cleaner data \( \boldsymbol{x}_{t-1} \) through measurements.
\begin{figure}[t]
	\centering
	\includegraphics[width=1\linewidth]{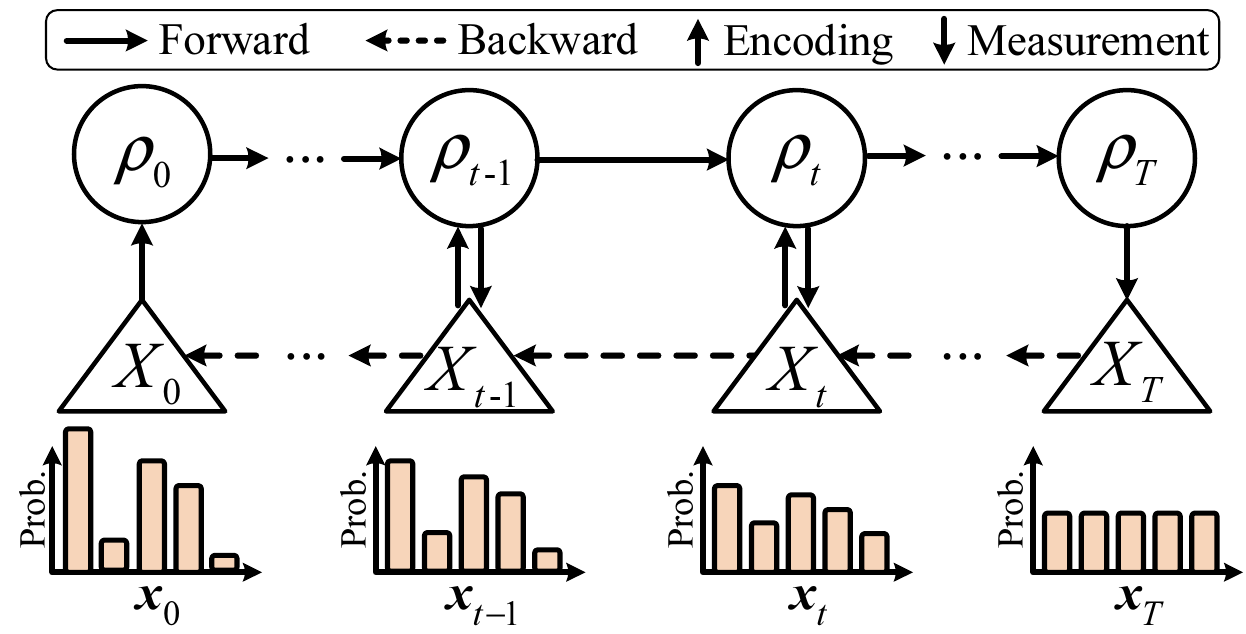}
	\caption{The framework of QD3PM.}
	\label{fig:framework}
\end{figure}

\subsubsection{Diffusion Forward Process}

In this paper, we consider generating binary data, where each dimension of a sample can only be 0 or 1. For problems with non-binary data, $ \lceil \log_2(K^N) \rceil $ qubits can be used to handle them. At timestep \( t \), an \( N \)-dimensional data \( \boldsymbol{x}_{t-1} = \{x_{t-1}^{(1)}, x_{t-1}^{(2)}, \ldots, x_{t-1}^{(N)}\} \)
is encoded by using basis encoding as a computational basis state in Hilbert space:
\begin{equation}
	\label{eq: encoding}
	\boldsymbol{x}_{t-1} \rightarrow |\boldsymbol{x}_{t-1}\rangle = |x_{t-1}^{(1)} x_{t-1}^{(2)} \cdots x_{t-1}^{(N)}\rangle.
\end{equation}
Applying a depolarizing channel \cite{nielsen2010quantum} to \( |\boldsymbol{x}_{t-1}\rangle \) yields the quantum state at the next timestep:
\begin{equation}
	\begin{split}
		\label{eq:one step depolarizing}
		\rho_t = &(1-\alpha_t)\boldsymbol{I}/d + \alpha_t|\boldsymbol{x}_{t-1}\rangle\langle \boldsymbol{x}_{t-1}|\\
		=&\sum_{\boldsymbol{x}} q(X_t=\boldsymbol{x} | \boldsymbol{x}_{t-1}) |\boldsymbol{x}\rangle\langle \boldsymbol{x}|,
	\end{split}
\end{equation}
where the scalar \( q(X_t = \boldsymbol{x} | \boldsymbol{x}_{t-1}) \) represents the probability of sampling outcome \( X_t = \boldsymbol{x} \), conditioned on the previous value \( \boldsymbol{x}_{t-1} \). The diffusion parameter \( \alpha_t \in [0, 1] \), \(\forall t \). $d=2^N$ is the dimension of Hilbert space and $ \boldsymbol{I} $ is the identity matrix. 
The state \( \rho_t\) encodes the joint distribution of all dimensions of \( \boldsymbol{x}_t \) conditioned on \( \boldsymbol{x}_{t-1} \).
Measuring \( \rho_t \) in the computational basis results in a sample from the probability distribution of \( \boldsymbol{x}_t \):
\begin{equation}
	\boldsymbol{x}_t \sim q(\boldsymbol{x}_t | \boldsymbol{x}_{t-1}) = \text{Tr}(|\boldsymbol{x}_t\rangle\langle \boldsymbol{x}_t| \rho_t).
\end{equation}
Where $ \text{Tr} $ is the trace operation \cite{nielsen2010quantum}.
The depolarizing channel decreases the probability of measuring the original \( |\boldsymbol{x}_{t-1}\rangle \) state and redistributes a uniform probability \( (1-\alpha_t)/d \) to other states.

Since \( \rho_t \) is diagonal, measuring it in the computational basis and re-encoding the result via (\ref{eq: encoding}) yield a quantum system with a density matrix equal to \( \rho_t \). Therefore, the relationship between \( \rho_t \) and \( |\boldsymbol{x}_0\rangle \) can be expressed as \cite{chen2024quantum}:
\begin{equation}
	\begin{split} 
		\rho_t = &(1 - \bar{\alpha}_t)\boldsymbol{I}/d + \bar{\alpha}_t|\boldsymbol{x}_0\rangle\langle \boldsymbol{x}_0|\\
		=&\sum_{\boldsymbol{x}} q(X_t=\boldsymbol{x} | \boldsymbol{x}_{0}) |\boldsymbol{x}\rangle\langle \boldsymbol{x}|,
	\end{split}
\end{equation}
where \( \bar{\alpha}_t = \prod_{i=1}^{t} \alpha_i \). The transition probability between \( \boldsymbol{x}_t \) at any timestep \( t \) and \( \boldsymbol{x}_0 \) is given as:
\begin{equation} 
	\label{eq:sample t step}
	\begin{split} 
		q(\boldsymbol{x}_t | \boldsymbol{x}_0)
		= &\text{Tr}\left[|\boldsymbol{x}_t\rangle\langle \boldsymbol{x}_t| \left( (1 - \bar{\alpha}_t)\boldsymbol{I}/d + \bar{\alpha}_t|\boldsymbol{x}_0\rangle\langle \boldsymbol{x}_0| \right) \right].
	\end{split} 
\end{equation} 
\( p(\boldsymbol{x}_T) \) approaches a uniform distribution as \( \bar{\alpha}_T \approx 0 \).
Therefore, we can obtain noisy sample \( \boldsymbol{x}_t \) from \( \boldsymbol{x}_0 \) via (\ref{eq:sample t step}).

In this paper, we use the cosine noise schedule \cite{nichol2021improved} to select \( \alpha_t \):
$
	\alpha_t = \frac{\bar{\alpha}_t}{\bar{\alpha}_{t-1}},
$
where
$ 	\bar{\alpha}_t = \frac{g(t)}{g(0)}, \quad g(t) = \cos \left(\frac{t / T + s}{1 + s} \cdot \frac{\pi}{2}\right)^2, $
with \( s \) being a small offset hyperparameter.

\subsubsection{Denoising Backward Process}
In this section, the denoising process \( p_{\boldsymbol{\theta}}(\boldsymbol{x}_{t-1} | \boldsymbol{x}_t) \) is directly implemented with quantum circuits. 
In \cref{section: sample in one step}, we introduce how to train the quantum circuit to model $p_{\boldsymbol{\theta}}(\boldsymbol{x}_{0} | \boldsymbol{x}_t)$.
Two key factors are considered in designing such a circuit. First, since \( p(\boldsymbol{x}_T) \) is a uniform distribution, its corresponding density matrix \( \rho_T \) is a completely mixed state. When inputting \( \boldsymbol{x}_T \) into the quantum circuit with basis encoding, a general unitary circuit does not change its distribution since the completely mixed state \( \rho_T \) remains invariant under any unitary transformation $U$, as $U\rho_TU^{\dagger}=\rho_T$. Therefore, either designing non-unitary circuits or avoiding basis encoding is necessary. Second, to minimize the total number of QD3PM model parameters, temporal information $t$ is used as input, enabling weight sharing across the denoising circuits at each step, thus avoiding the need for a distinct set of parameters for each of the $T$ denoising iterations.

\begin{figure}[t]
	\centering
	\includegraphics[width=1\linewidth]{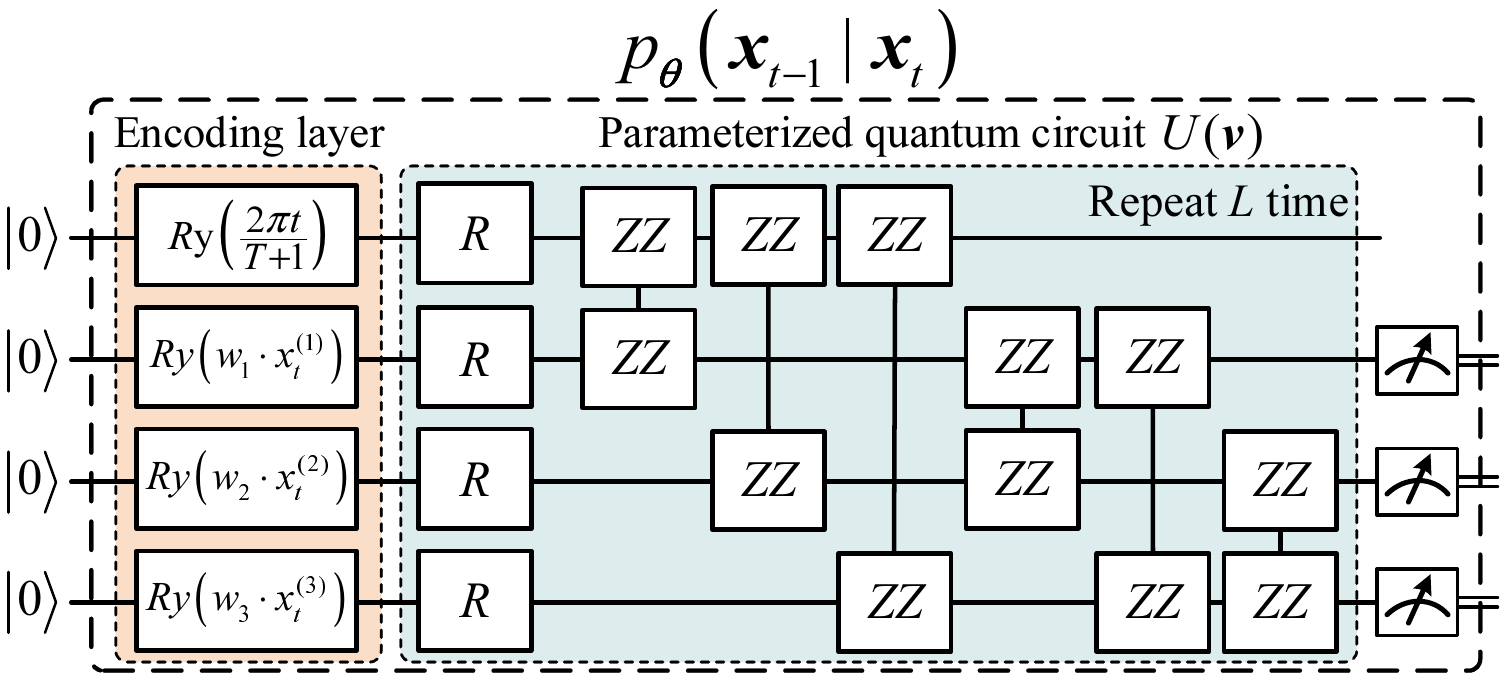}
	\caption{The denoising circuit architecture for noisy data $\boldsymbol{x}_t=\{x_t^{1},x_t^{2},x_t^{3}\}$. We can sample a denoising data $\boldsymbol{x}_{t-1}$ from this circuit. The rotation gate $ R $ is defined as \( R = R_xR_yR_x \). The gate ZZ represents the two-qubit gate $ZZ(v)=\exp{(-iv (Z\otimes Z) /2)}$. The trainable parameters in denoising circuit are $\boldsymbol{\theta}=\{\boldsymbol{w},\boldsymbol{v}\}$.}
	\label{fig:PQC}
\end{figure}

With these considerations, we design the circuit in \cref{fig:PQC} to denoise noisy data \( \boldsymbol{x}_t = \{x_t^{(1)}, x_t^{(2)}, \dots, x_t^{(N)}\} \). The first layer of the circuit encodes \( t \) and \( \boldsymbol{x}_t \), resulting in the quantum state:
\begin{equation}
	|\psi(t, \boldsymbol{x}_t)\rangle = R_y\left(\frac{2\pi \cdot t}{T+1}\right)|0\rangle \otimes_{i=1}^{N} R_y\left(w_i \cdot x_t^{(i)}\right)|0\rangle, 
\end{equation}
where $R_y$ denotes the $y$-axis rotation gate commonly used in quantum computing \cite{nielsen2010quantum} and \( \boldsymbol{w} = \{w_i\}_{i=1}^N \) represents trainable parameters. After the encoding layer, a parameterized quantum circuit (PQC) is applied to obtain the output state:
\begin{equation}
	|\psi_{\text{out}}\rangle = U(\boldsymbol{v})|\psi(t, \boldsymbol{x}_t)\rangle.
\end{equation}
The PQC $U(\boldsymbol{v})$ consists of \( L \) layers with the same structure, but each layer has different parameters. $\boldsymbol{v}$ stands for trainable parameters in PQC. A measurement in the computational basis is performed on the qubits, from the second to the last, for a total of \( N \) qubits. The reduced density matrix of the last $ N $ qubits is:
\begin{equation}
	\begin{split}
		\rho_{\text{out}} = &\text{Tr}_1(|\psi_{\text{out}}\rangle\langle\psi_{\text{out}}|)
		\\
		=&\sum_{\boldsymbol{x}} p_{\boldsymbol{\theta}}(X_{t-1}=\boldsymbol{x} | \boldsymbol{x}_{t}) |\boldsymbol{x}\rangle\langle \boldsymbol{x}|,
	\end{split}
\end{equation}
where the scalar \( p_{\boldsymbol{\theta}}(X_{t-1}=\boldsymbol{x} | \boldsymbol{x}_{t}) \) represents the probability of sampling outcome \( X_{t-1} = \boldsymbol{x} \), conditioned on the noise data \( \boldsymbol{x}_{t} \). $\text{Tr}_1$ represents the partial trace operation that traces out the first qubit system. In general, \( \rho_{\text{out}} \) is a mixed state encoding the model-predicted distribution at timestep \( t-1 \) given \( \boldsymbol{x}_t \). 
We can measure it to generate a denoised sample:
\begin{equation}
	\boldsymbol{x}_{t-1} \sim p_{\boldsymbol{\theta}}(\boldsymbol{x}_{t-1} | \boldsymbol{x}_t) = \text{Tr}\left(|\boldsymbol{x}_{t-1}\rangle\langle\boldsymbol{x}_{t-1}| \rho_{\text{out}} \right),
\end{equation}
where \( \boldsymbol{\theta}=\{\boldsymbol{w}, \boldsymbol{v}\} \) represents the trainable parameters in the denoising process $ p_{\boldsymbol{\theta}}(\boldsymbol{x}_{t-1} | \boldsymbol{x}_t) $.

\subsubsection{Derivation of Posterior Distribution State}
To train the denoising process, a ground-truth posterior distribution is required. In existing discrete diffusion models, this posterior is derived via Bayes' theorem. However, this method cannot be directly applied to QD3PM, as it models the joint distribution of the data and performs diffusion at the quantum level. Instead, we use the quantum conditional state formalism and quantum Bayes' theorem \cite{leifer2013towards} to derive the form of the posterior distribution for training. \cref{section:Quantum Conditional States Formalism} provides a brief review of the formalism and quantum Bayes' theorem.
For clarity in the following derivations, we outline some key notation conventions:
\begin{remark}
	\label{remark1} In conventional quantum notation, a quantum state \( \rho_{t-1} \) evolves to \( \rho_t \) through channel \( \mathcal{E} \). Both \( \rho_{t-1} \) and \( \rho_t \) reside in the same quantum register and Hilbert space, with the random variable resulting from measuring \( \rho_t \) denoted as \( X_t \), and its values represented by \( \boldsymbol{x}_t \). 
 	However, the conditional quantum state formalism distinguishes Hilbert spaces based on temporal and spatial factors. Though \( \rho_{t-1} \) and \( \rho_t \) reside on the same register, the formalism treats them as belonging to distinct Hilbert spaces, \( \mathcal{H}_{t-1} \) and \( \mathcal{H}_t \), due to their different timesteps. Thus, channel \( \mathcal{E} \) is redefined as \( \mathcal{E}_{\mathcal{H}_t | \mathcal{H}_{t-1}} \), which maps operators from \( \mathcal{H}_{t-1} \) to \( \mathcal{H}_t \), reflecting the evolution from timestep \( t-1 \) to \( t \).
\end{remark}


We now turn to the problem of deriving the ground-truth posterior distribution \( q(\boldsymbol{x}_{t-1} | \boldsymbol{x}_t, \boldsymbol{x}_0) \) in the quantum scenario, given known conditions $X_0=\boldsymbol{x}_0$ and $X_t=\boldsymbol{x}_t$. This distribution is obtained by measuring the posterior quantum state $\tilde{\rho}_{t-1}$ in the computational basis. To derive $\tilde{\rho}_{t-1}$, we leverage the relationship among $\boldsymbol{x}_0$, $\rho_t$, $\rho_{t-1}$, and $\boldsymbol{x}_t$, as illustrated in \cref{fig:Inference_rho_t_1}. The process begins by formulating a direct relationship between space $\mathcal{H}_{t-1}$ and the random variable $X_t$. This is achieved by marginalizing the intermediate space $\mathcal{H}_t$ using the quantum conditional state formalism. Subsequently, the posterior state $\tilde{\rho}_{t-1}$ is computed by applying the quantum Bayesian update rule \cite{leifer2013towards}, conditioned on the observed measurement $X_t=\boldsymbol{x}_t$.

\begin{figure}[t]
	\centering
	\includegraphics[width=0.9\linewidth]{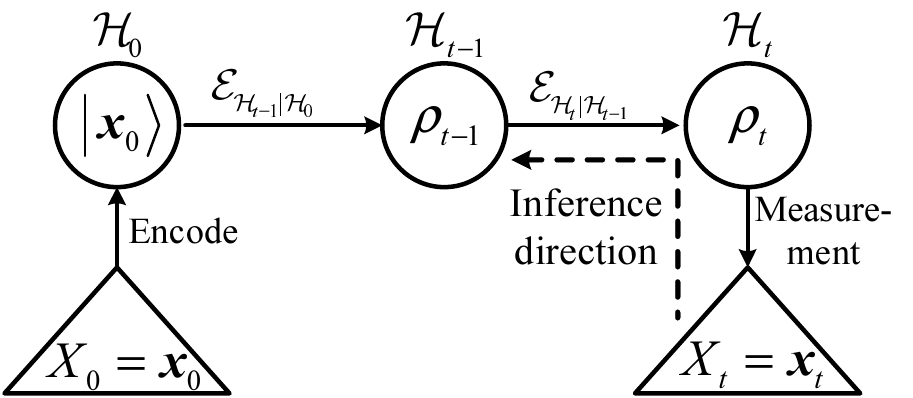}
	\caption{The causal relationship in deriving the posterior state.}
	\label{fig:Inference_rho_t_1}
\end{figure}


First, the transformation of \( \rho_{t-1} \) to \( \rho_t \) through channel \( \mathcal{E}_{\mathcal{H}_t | \mathcal{H}_{t-1}} \) is expressed via the Choi-Jamio\l{}kowski isomorphic \cite{jamiolkowski1972linear,choi1975completely,leifer2013towards} operator, i.e.,
\begin{equation}
	\label{equ:J-isomorphic}
	\varrho_{\mathcal{H}_t | \mathcal{H}_{t-1}} = \left( \mathcal{E}_{\mathcal{H}_t | \mathcal{H}^{'}_{t-1}} \otimes \boldsymbol{I}_{\mathcal{H}_{t-1}} \right) \left( \left| \Phi^+ \right\rangle \left\langle \Phi^+ \right|_{\mathcal{H}_{t-1} \mathcal{H}_{t-1}'} \right),
\end{equation}
where \( \mathcal{H}_{t-1}' \) is a copy of \( \mathcal{H}_{t-1} \), and \( |\Phi^+\rangle = \frac{1}{\sqrt{d}} \sum_{\boldsymbol{i}=0}^{d-1} |\boldsymbol{i}\rangle \otimes |\boldsymbol{i}\rangle \) is a maximally entangled state with \( d = 2^N \) being the dimension of the system.  $ \varrho_{\mathcal{H}_t|\mathcal{H}_{t-1}} $ is an operator on the composite space $ \mathcal{H}_t \otimes \mathcal{H}_{t-1} $. The process of measuring \( \rho_t \) in the computational basis to obtain random variable \( X_t \) is represented via a positive operator:
\begin{equation}
	\label{equ:hybrid operator}
	\varrho_{X_t | \mathcal{H}_t} = \sum_{\boldsymbol{x}} |\boldsymbol{x}\rangle \langle \boldsymbol{x}|_{X_t} \otimes |\boldsymbol{x}\rangle \langle \boldsymbol{x}|_{\mathcal{H}_t}.
\end{equation}
Therefore, the direct relationship between \( X_t \) and \( \rho_{t-1} \) is described via a causal conditional state \( \varrho_{X_t | \mathcal{H}_{t-1}} \), i.e.,
\begin{equation}
	\label{equ:partial trace}
	\varrho_{X_t | \mathcal{H}_{t-1}} = \text{Tr}_{\mathcal{H}_t} \left( \varrho_{X_t | \mathcal{H}_t} \cdot \varrho_{\mathcal{H}_t | \mathcal{H}_{t-1}} \right).
\end{equation}
Finally, when the measurement outcome \( X_t = \boldsymbol{x}_t \) is obtained, \( \rho_{t-1} \) can be updated according to the quantum Bayes' theorem:
\begin{equation}
	\begin{split} 
		\label{equ:update rule}
		\rho_{t-1} &\rightarrow \tilde{\rho}_{{t-1} | X_t = \boldsymbol{x}_t,X_0 = \boldsymbol{x}_0} \\
		&= \varrho_{X_t = \boldsymbol{x}_t | \mathcal{H}_{t-1}} \star \left( \rho_{t-1 | X_0 = \boldsymbol{x}_0} \rho_{X_t = \boldsymbol{x}_t}^{-1} \right),
	\end{split} 
\end{equation}
where \( \rho_{X_t = \boldsymbol{x}_t} \) is the probability of obtaining the measurement result \( X_t = \boldsymbol{x}_t \) given \( X_0 = \boldsymbol{x}_0 \). The $\star$ product is defined as $M \star N = N^{\frac{1}{2}} M N^{\frac{1}{2}}$. We derive the update rule for \( \rho_{t-1} \) under any non-unitary diffusion channel \( \mathcal{E}_{\mathcal{H}_t | \mathcal{H}_{t-1}} \), which is not applicable to unitary channels and is therefore excluded.
The ground-truth posterior distribution is obtained by measuring \( \tilde{\rho}_{t-1} \) in the computational basis:
\begin{equation}
	q(\boldsymbol{x}_{t-1} | \boldsymbol{x}_t, \boldsymbol{x}_0) = \text{Tr} \left( |\boldsymbol{x}_{t-1}\rangle \langle \boldsymbol{x}_{t-1}| \tilde{\rho}_{t-1} \right).
\end{equation}
For our implementation, \( \mathcal{E}_{\mathcal{H}_t | \mathcal{H}_{t-1}} \) is a depolarizing channel parameterized by \( \alpha_t \), leading to the following theorem:
\begin{theorem}
	\label{thm:non-unitary posterior}
	If a diffusion process is a depolarizing channel parameterized by \(\alpha_t\) in (\ref{eq:one step depolarizing}), and all density matrices during diffusion are diagonal, the posterior state used for training the denoising model, representing the ground-truth posterior distribution, can be derived from the quantum conditional state formalism and quantum Bayes' theorem as follows:
	\begin{equation}
		\begin{split} 
			\label{equ:non-unitary posterior}
			&\tilde{\rho}_{t-1 | X_t = \boldsymbol{x}_t, X_0 = \boldsymbol{x}_0} \\
			= &\frac{\alpha_t \langle \boldsymbol{x}_t | \rho_{t-1} | \boldsymbol{x}_t \rangle |\boldsymbol{x}_t\rangle \langle \boldsymbol{x}_t| + \frac{1}{d^2}(1-\alpha_t) \rho_{t-1}}{\alpha_t \langle \boldsymbol{x}_t | \rho_{t-1} | \boldsymbol{x}_t \rangle + \frac{1}{d^2} (1-\alpha_t)},
		\end{split} 
	\end{equation}
	where \( \rho_{t-1} = (1 - \bar{\alpha}_{t-1}) \boldsymbol{I}/d + \bar{\alpha}_{t-1} |\boldsymbol{x}_0\rangle \langle \boldsymbol{x}_0| \).
\end{theorem}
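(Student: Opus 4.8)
The plan is to specialize the general posterior construction of the preceding subsection to the concrete depolarizing channel and then use the stated diagonality hypothesis to collapse the noncommutative $\star$-product into an ordinary product of diagonal matrices. The first step is to evaluate the Choi--Jamio\l{}kowski operator $\varrho_{\mathcal{H}_t|\mathcal{H}_{t-1}}$ for this channel. Writing the channel action on a general operator as $\mathcal{E}(|i\rangle\langle j|)=(1-\alpha_t)\tfrac{\boldsymbol{I}}{d}\delta_{ij}+\alpha_t|i\rangle\langle j|$ and applying $\mathcal{E}\otimes\boldsymbol{I}$ to $|\Phi^+\rangle\langle\Phi^+|$, the identity (trace) part produces a term proportional to $\boldsymbol{I}\otimes\boldsymbol{I}$ while the coherent part reproduces $|\Phi^+\rangle\langle\Phi^+|$, giving $\varrho_{\mathcal{H}_t|\mathcal{H}_{t-1}}=\tfrac{1-\alpha_t}{d^2}\,\boldsymbol{I}\otimes\boldsymbol{I}+\alpha_t\,|\Phi^+\rangle\langle\Phi^+|$.

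Next I would marginalize the intermediate register. Multiplying the hybrid measurement operator $\varrho_{X_t|\mathcal{H}_t}$ into $\varrho_{\mathcal{H}_t|\mathcal{H}_{t-1}}$ and tracing out $\mathcal{H}_t$, the identity block contributes a term proportional to $\boldsymbol{I}_{\mathcal{H}_{t-1}}$, while the entangled block, through the contraction $\langle j|\boldsymbol{x}\rangle=\delta_{j\boldsymbol{x}}$, collapses onto the diagonal projector $|\boldsymbol{x}\rangle\langle\boldsymbol{x}|$. This yields a causal conditional state of the form $\varrho_{X_t|\mathcal{H}_{t-1}}=\sum_{\boldsymbol{x}}|\boldsymbol{x}\rangle\langle\boldsymbol{x}|_{X_t}\otimes\big(c_0\,\boldsymbol{I}+c_1\,|\boldsymbol{x}\rangle\langle\boldsymbol{x}|\big)$, and conditioning on the observed outcome $X_t=\boldsymbol{x}_t$ leaves a diagonal operator $c_0\,\boldsymbol{I}+c_1\,|\boldsymbol{x}_t\rangle\langle\boldsymbol{x}_t|$ on $\mathcal{H}_{t-1}$.

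Then I would run the Bayesian update rule. Since $\rho_{X_t=\boldsymbol{x}_t}$ is a scalar probability, the operator $N=\rho_{t-1}\rho_{X_t=\boldsymbol{x}_t}^{-1}$ is a positive multiple of $\rho_{t-1}$, so $N^{1/2}$ is a multiple of $\rho_{t-1}^{1/2}$. Here the diagonality hypothesis is decisive: because $\rho_{t-1}$, $\boldsymbol{I}$, and $|\boldsymbol{x}_t\rangle\langle\boldsymbol{x}_t|$ are simultaneously diagonal in the computational basis, they commute and $\rho_{t-1}^{1/2}M\rho_{t-1}^{1/2}=\rho_{t-1}M$ entrywise, so the $\star$-product degenerates to ordinary matrix multiplication. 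Using $\rho_{t-1}|\boldsymbol{x}_t\rangle\langle\boldsymbol{x}_t|=\langle\boldsymbol{x}_t|\rho_{t-1}|\boldsymbol{x}_t\rangle\,|\boldsymbol{x}_t\rangle\langle\boldsymbol{x}_t|$, the product splits into a $\rho_{t-1}$ term and a $|\boldsymbol{x}_t\rangle\langle\boldsymbol{x}_t|$ term; fixing the scalar $\rho_{X_t=\boldsymbol{x}_t}$ by the normalization $\operatorname{Tr}\tilde{\rho}_{t-1}=1$ then converts the unnormalized numerator into the quoted ratio.

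The main obstacle I anticipate is not conceptual but the careful bookkeeping of the factors of $d$, which enter at three places — the normalization of $|\Phi^+\rangle$ in the Choi operator, the partial trace over $\mathcal{H}_t$, and the scalar normalizer $\rho_{X_t=\boldsymbol{x}_t}$ — together with a clean justification that the $\star$-product genuinely reduces to a plain product precisely because every operator in sight is diagonal. As an independent check on the relative weighting of the two terms, which is the part most sensitive to these $d$ factors, I would recompute the posterior elementarily from the joint Bayes identity $q(\boldsymbol{x}_{t-1}|\boldsymbol{x}_t,\boldsymbol{x}_0)\propto q(\boldsymbol{x}_t|\boldsymbol{x}_{t-1})\,q(\boldsymbol{x}_{t-1}|\boldsymbol{x}_0)$, reading $q(\boldsymbol{x}_t|\boldsymbol{x}_{t-1})$ and $q(\boldsymbol{x}_{t-1}|\boldsymbol{x}_0)$ directly off the diagonal entries of $\rho_t$ and $\rho_{t-1}$, and match the resulting coefficients against the formalism-derived expression.
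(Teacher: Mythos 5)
Your route is the same as the paper's: compute the Choi--Jamio\l{}kowski operator of the depolarizing channel, contract it with the measurement operator and trace out $\mathcal{H}_t$ to obtain $\varrho_{X_t|\mathcal{H}_{t-1}}$, then apply the quantum Bayes update, using diagonality to collapse the $\star$-product into an ordinary product. Your intermediate objects also match the paper's: $\varrho_{\mathcal{H}_t|\mathcal{H}_{t-1}}=\alpha_t|\Phi^+\rangle\langle\Phi^+|+\tfrac{1-\alpha_t}{d^2}\,\boldsymbol{I}\otimes\boldsymbol{I}$ and, after marginalizing and conditioning, $\varrho_{X_t=\boldsymbol{x}_t|\mathcal{H}_{t-1}}=c_1|\boldsymbol{x}_t\rangle\langle\boldsymbol{x}_t|+c_0\boldsymbol{I}$ with $c_1=\alpha_t/d$ and $c_0=(1-\alpha_t)/d^2$.

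One substantive caution: the $d$-bookkeeping you single out as the main hazard is precisely where a careful execution of your plan diverges from the stated formula, and the elementary cross-check you propose will expose this. Running the Bayes update consistently with $c_1,c_0$ above and normalizing by the trace gives a posterior whose two terms are weighted as $c_1:c_0=\alpha_t:\tfrac{1-\alpha_t}{d}$ (after clearing the common $1/d$), which is exactly what classical Bayes gives, since $q(\boldsymbol{x}_t|\boldsymbol{x}_{t-1})=\tfrac{1-\alpha_t}{d}+\alpha_t\delta_{\boldsymbol{x}_t,\boldsymbol{x}_{t-1}}$ and $q(\boldsymbol{x}_{t-1}|\boldsymbol{x}_0)=\langle\boldsymbol{x}_{t-1}|\rho_{t-1}|\boldsymbol{x}_{t-1}\rangle$. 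The theorem's expression instead weights the terms as $\alpha_t:\tfrac{1-\alpha_t}{d^2}$: in the final step the paper's derivation substitutes $\alpha_t|\boldsymbol{x}_t\rangle\langle\boldsymbol{x}_t|$ for $\tfrac{\alpha_t}{d}|\boldsymbol{x}_t\rangle\langle\boldsymbol{x}_t|$ while keeping $\tfrac{1-\alpha_t}{d^2}\boldsymbol{I}$, i.e.\ the $1/d$ arising from the normalized $|\Phi^+\rangle$ is cleared from one coefficient but not the other. A concrete check with $d=2$, $\bar\alpha_{t-1}=0$, $\alpha_t=1/2$ gives the Bayes posterior $(3/4,1/4)$ but $(5/6,1/6)$ from the stated formula. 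So your strategy is right and your cross-check is not a formality: done carefully it yields $\tfrac{1-\alpha_t}{d}$ where the target displays $\tfrac{1-\alpha_t}{d^2}$, and you should be prepared to reconcile that (either carry the $1/d$ uniformly, or use the unnormalized maximally entangled state so that $\varrho_{X_t=\boldsymbol{x}_t|\mathcal{H}_{t-1}}=\alpha_t|\boldsymbol{x}_t\rangle\langle\boldsymbol{x}_t|+\tfrac{1-\alpha_t}{d}\boldsymbol{I}$ from the outset).
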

\begin{proof} 
	The proof is based on the fact that \( \rho_t \) is diagonal at each timestep, and proceeds through (\ref{equ:J-isomorphic})-(\ref{equ:update rule}) as detailed in \cref{section: proof of posterior}.
\end{proof}

\begin{figure}[t]
	\centering
	\includegraphics[width=1\linewidth]{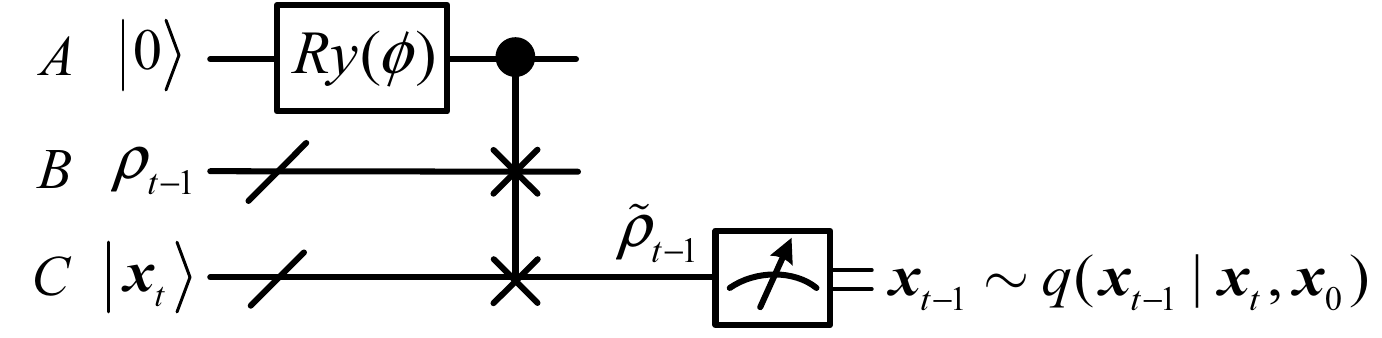}
	\caption{Circuit architecture to generate the posterior state.}
	\label{fig:pos_circuit1}
\end{figure}
To compute the posterior state in (\ref{equ:non-unitary posterior}) using quantum circuits, we express it as:
\begin{equation}
	\tilde{\rho}_{t-1|X_t=\boldsymbol{x}_t,X_0=\boldsymbol{x}_0} = \gamma_1 |\boldsymbol{x}_t\rangle\langle \boldsymbol{x}_t| + \gamma_2 \rho_{t-1},
\end{equation}
where \( \gamma_1 + \gamma_2 = 1 \), \( \gamma_1 = \frac{ \alpha_t \langle \boldsymbol{x}_t | \rho_{t-1} | \boldsymbol{x}_t \rangle }{ \alpha_t \langle \boldsymbol{x}_t | \rho_{t-1} | \boldsymbol{x}_t \rangle +  \frac{1}{d^2} (1-\alpha_t) } \), and \( \gamma_2 = \frac{ \frac{1}{d^2} (1-\alpha_t) }{ \alpha_t \langle \boldsymbol{x}_t | \rho_{t-1} | \boldsymbol{x}_t \rangle +  \frac{1}{d^2} (1-\alpha_t) } \). We design a quantum circuit as shown in \cref{fig:pos_circuit1}.
To compute \( \tilde{\rho}_{t-1} \), we set the rotation parameter of the \( R_y \) gate as:
\begin{equation}
	\phi = 2 \arccos \left( \sqrt{ \frac{ \alpha_t \langle \boldsymbol{x}_t | \rho_{t-1} | \boldsymbol{x}_t \rangle }{ \alpha_t \langle \boldsymbol{x}_t | \rho_{t-1} | \boldsymbol{x}_t \rangle + (1 - \alpha_t) \frac{1}{d^2} } } \right).
\end{equation}
This setup allows the computation of \( \tilde{\rho}_{t-1} \). Measuring in computational basis gives a distribution that matches the ground-truth posterior \( q(\boldsymbol{x}_{t-1}|\boldsymbol{x}_t, \boldsymbol{x}_0) \). The value of \( \langle \boldsymbol{x}_t | \rho_{t-1} | \boldsymbol{x}_t \rangle \) can be determined as:
\begin{equation}
	\langle \boldsymbol{x}_t | \rho_{t-1} | \boldsymbol{x}_t \rangle = 
	\begin{cases} 
		\bar{\alpha}_{t-1} + \frac{1 - \bar{\alpha}_{t-1}}{d} , &  \boldsymbol{x}_t = \boldsymbol{x}_0, \\
		\frac{1 - \bar{\alpha}_{t-1}}{d} , & \boldsymbol{x}_t \neq \boldsymbol{x}_0. 
	\end{cases} 
\end{equation}
\cref{section: pose circuit derivation} provides the detailed computation of the circuit.

\subsubsection{Training and Generation}
The optimization objective of QD3PM aligns with that of existing discrete diffusion models, both aiming to optimize the variational lower bound $L_{vb}$ as shown in (\ref{eq:lower bound}). Since the diffusion process has no trainable parameters, the optimization focuses on the $\mathcal{L}_{t-1}$ and $\mathcal{L}_0$ terms. Thus, the objective function of QD3PM is:
\begin{equation}
	\begin{split}
		\label{eq: loss function}
		\mathcal{L} = & \mathcal{L}_{t-1} + \mathcal{L}_{0}\\  
		=&\mathbb{E}_{q(\boldsymbol{x}_0),\ t \sim \mathcal{U}(2, T),\ q(\boldsymbol{x}_t \mid \boldsymbol{x}_0)} [ \\ 
		&\quad D_{\text{KL}}\left(q(\boldsymbol{x}_{t-1} \mid \boldsymbol{x}_t, \boldsymbol{x}_0) \,\|\, p_{\boldsymbol{\theta}}(\boldsymbol{x}_{t-1} \mid \boldsymbol{x}_t) \right) ]\\
		&-\mathbb{E}_{q(\boldsymbol{x}_0),q(\boldsymbol{x}_1 | \boldsymbol{x}_0)}  [\log p_{\boldsymbol{\theta}}(\boldsymbol{x}_0 | \boldsymbol{x}_1)],
	\end{split}
\end{equation}
where $\mathcal{U}(2, T)$ is a uniform distribution between 2 and $T$.

QD3PM is an implicit quantum generative model, meaning that it generates samples but does not provide a full probability mass function in polynomial time. For implicit generative models, training with Maximum Mean Discrepancy (MMD) loss can mitigate the barren plateau problem according to \cite{rudolph2024trainability}. Thus, we minimize the KL term in (\ref{eq: loss function}) by minimizing the following MMD loss:
\begin{equation}
	\begin{split}
\mathcal{L}_{\mathrm{MMD}}(p_{\boldsymbol{\theta}},q)&=  \mathbb{E}_{\boldsymbol{x}, \boldsymbol{y} \sim p_{\boldsymbol{\theta}}}[\mathcal{K}(\boldsymbol{x}, \boldsymbol{y})]+\mathbb{E}_{\boldsymbol{x}, \boldsymbol{y} \sim q}[\mathcal{K}(\boldsymbol{x}, \boldsymbol{y})]\\&-2 \mathbb{E}_{\boldsymbol{x} \sim p_{\boldsymbol{\theta}}, \boldsymbol{y} \sim q}[\mathcal{K}(\boldsymbol{x}, \boldsymbol{y})],
	\end{split}
\end{equation}
where bandwidth $\sigma$ is a hyperparameter that must be carefully chosen, as it impacts both the model's trainability and ability to fit the target distribution. In Appendix \ref{section: Bandwidth}, we demonstrate how to set the bandwidth.

After training, we can generate clean samples $\boldsymbol{x}_0$ by iteratively sampling for $T$ steps with $p_{\boldsymbol{\theta}}(\boldsymbol{x}_{t-1} | \boldsymbol{x}_t)$, starting from timestep $T$.
The training and generation algorithms are shown in \cref{section: training and generation algorithms}.

\subsection{Learning to Sample in One Step}
\label{section: sample in one step}
In this section, we present a modified version of QD3PM to enable the circuit to learn $p_{\boldsymbol{\theta}}(\boldsymbol{x}_{0}|\boldsymbol{x}_{t})$, thereby achieving single-step sampling after training.
While one-shot generation has been explored for classical diffusion models \cite{yin2024one,zhou2024score}, our work, to the best of our knowledge, is the first to achieve this within a fully quantum diffusion framework.
We design our quantum circuits with the objective of learning $p_{\boldsymbol{\theta}}(\boldsymbol{x}_{0}|\boldsymbol{x}_{t})$, which is then used to construct $p_{\boldsymbol{\theta}}(\boldsymbol{x}_{t-1}|\boldsymbol{x}_t)$. This approach maintains the same diffusion and training processes as in the previous section.

\textbf{Principle.} The transition probability $p_{\boldsymbol{\theta}}(\boldsymbol{x}_{t-1}|\boldsymbol{x}_t)$ can be expressed by marginalizing over the original data $\tilde{\boldsymbol{x}}_0$:
\begin{equation}
\label{eq: distribution get xt_1}
    p_{\boldsymbol{\theta}}(\boldsymbol{x}_{t-1}|\boldsymbol{x}_t)=\sum_{\tilde{\boldsymbol{x}}_0} p_{\boldsymbol{\theta}}(\tilde{\boldsymbol{x}}_{0}|\boldsymbol{x}_t) q(\boldsymbol{x}_{t-1}|\boldsymbol{x}_t,\tilde{\boldsymbol{x}}_0),
\end{equation}
where \( p_{\boldsymbol{\theta}}(\tilde{\boldsymbol{x}}_{0}|\boldsymbol{x}_t) \) is the probability distribution learned by the circuit. \( q(\boldsymbol{x}_{t-1}|\boldsymbol{x}_t,\tilde{\boldsymbol{x}}_0) \) is the measurement distribution over the posterior state deriving from (\ref{equ:non-unitary posterior}). 
Measuring the circuit produces an outcome \( 
\tilde{\boldsymbol{x}}_{0} \) with probability \( p_{\boldsymbol{\theta}}(\tilde{\boldsymbol{x}}_{0} | \boldsymbol{x}_{t}) \). Using basis encoding to represent it as state \( |\tilde{\boldsymbol{x}}_{0}\rangle \), the encoding register's density matrix can be represented as:
\begin{equation}
\label{eq: encode x0}
    \rho_{\text{enc}} = \sum_{\tilde{\boldsymbol{x}}_{0}} p_{\boldsymbol{\theta}}(\tilde{\boldsymbol{x}}_{0} | \boldsymbol{x}_{t}) \, |\tilde{\boldsymbol{x}}_{0}\rangle \langle \tilde{\boldsymbol{x}}_{0}|.
\end{equation}
Next, we compute the posterior state using \cref{equ:non-unitary posterior}. We first apply a depolarizing channel to $\rho_{\text{enc}}$, yielding
\begin{equation}
\label{eq: encoding state after depolarizing}
    \rho_{\text{enc}}' = \sum_{\tilde{\boldsymbol{x}}_{0}} p_{\boldsymbol{\theta}}(\tilde{\boldsymbol{x}}_{0} | \boldsymbol{x}_{t}) \, \rho_{t-1},
\end{equation}
where
$
    \rho_{t-1} = (1 - \bar{\alpha}_{t-1}) \frac{\boldsymbol{I}}{d} + \bar{\alpha}_{t-1} |\tilde{\boldsymbol{x}}_0\rangle \langle \tilde{\boldsymbol{x}}_0|.
$
\cref{eq: encoding state after depolarizing} is correct since the channel is linear and preserves the convexity of density matrices \cite{wilde2013quantum}.
Then we get the before-measurement state according to the \cref{thm:non-unitary posterior}:
\begin{equation}
\label{eq: rho_final}
    \rho_{\text{final}} = \sum_{\tilde{\boldsymbol{x}}_{0}} p_{\boldsymbol{\theta}}(\tilde{\boldsymbol{x}}_{0} | \boldsymbol{x}_{t}) \, \tilde{\rho}_{t-1 | X_t = \boldsymbol{x}_t, X_0 = \tilde{\boldsymbol{x}}_0}.
\end{equation}
Measuring $\rho_{\text{final}}$ in the computational basis produces distribution $p_{\boldsymbol{\theta}}(\boldsymbol{x}_{t-1}|\boldsymbol{x}_t)$:
\begin{equation}%
\begin{split}
\label{eq: marginalizaing x0}
&\text{Tr}(|\boldsymbol{x}_{t-1}\rangle \langle \boldsymbol{x}_{t-1}|\rho_{\text{final}} )\\= &\sum_{\tilde{\boldsymbol{x}}_{0}} p_{\boldsymbol{\theta}}(\tilde{\boldsymbol{x}}_{0} | \boldsymbol{x}_{t})\text{Tr}(|\boldsymbol{x}_{t-1}\rangle \langle \boldsymbol{x}_{t-1}|\tilde{\rho}_{t-1 | X_t = \boldsymbol{x}_t, X_0 = \tilde{\boldsymbol{x}}_0}) \\
=&\sum_{\tilde{\boldsymbol{x}}_{0}} p_{\boldsymbol{\theta}}(\tilde{\boldsymbol{x}}_{0} | \boldsymbol{x}_{t})q(\boldsymbol{x}_{t-1} | \boldsymbol{x}_{t}, \tilde{\boldsymbol{x}}_{0}).
\end{split}
\end{equation}

\begin{figure*}[t]
	\centering
	\includegraphics[width=\linewidth]{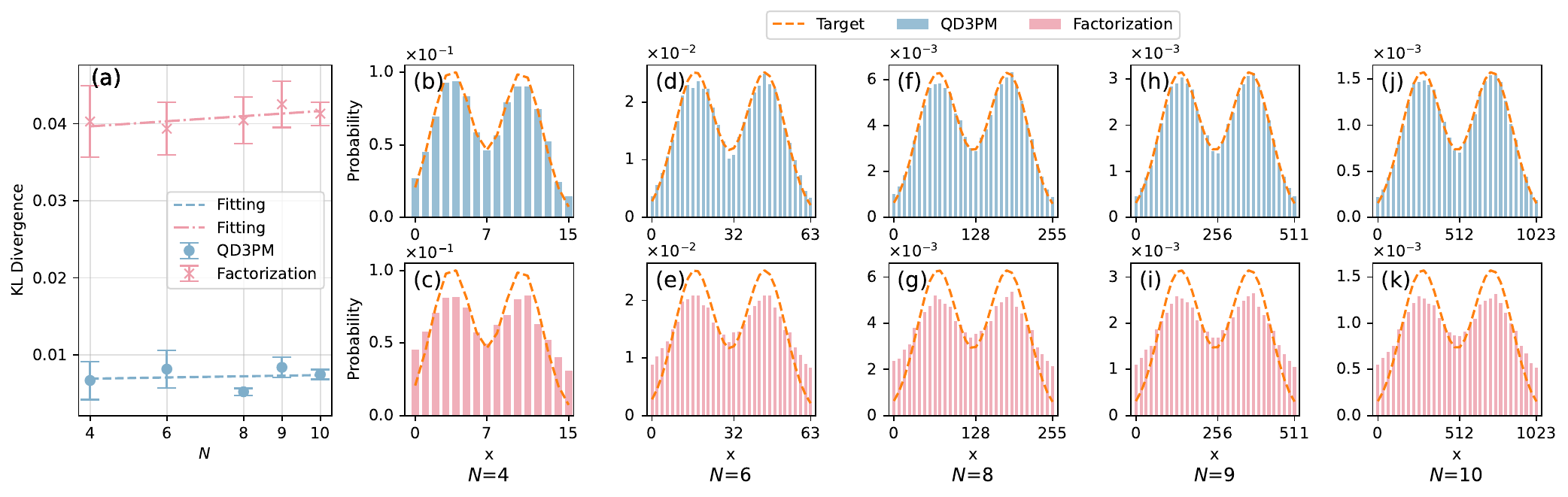}
	\caption{Numerical results of QD3PM and its factorized version fitting mixed Gaussian dataset. (a) The trend in KL divergence between generated and target distributions for different models as $N$ increases. (b)-(k) Visual comparison of fitting performance between QD3PM and the factorization method for various data sizes $N$.}
	\label{fig:8gaussian_kl_sample}
\end{figure*}

\textbf{Simplified calculation.}
The final sampling of $p_{\boldsymbol{\theta}}(\boldsymbol{x}_{t-1}|\boldsymbol{x}_t)$ depends  only on the diagonal elements of  $\rho_{\text{final}}$. Recognizing this, we can simplify the procedure by omitting the intermediate measurement of the circuit's output $\rho_{\text{out}}$. Instead, we leverage the linearity of the subsequent quantum operations (the depolarizing channel and posterior update derived from \cref{equ:non-unitary posterior}) by applying them directly to $\rho_{\text{out}}$. While the density matrix resulting from this direct application may differ from the theoretical $\rho_{\text{final}}$ (in Equation (\ref{eq: rho_final})) in its off-diagonal elements, the crucial main diagonal elements required for sampling $p_{\boldsymbol{\theta}}(\boldsymbol{x}_{t-1}|\boldsymbol{x}_t)$ are correctly reproduced. This streamlined approach elegantly achieves the correct outcome by operating directly on the circuit's output state, avoiding the cost of intermediate measurement and re-preparation.

\subsection{Numerical Simulations}
\label{section: simulation}
We evaluate the performance of our model on two datasets: Bars-and-Stripes (BAS) and the Mixed Gaussian ones. We assume that all the denoising circuits have an all-to-all qubit topology. The simulation setup is shown in \cref{section: simulations setup}. For details on loss metrics during training, and KL divergence to target distribution during generation, please refer to Appendix \ref{section: more result}.

\begin{figure*}[t]
	\centering
\includegraphics[width=1\linewidth]{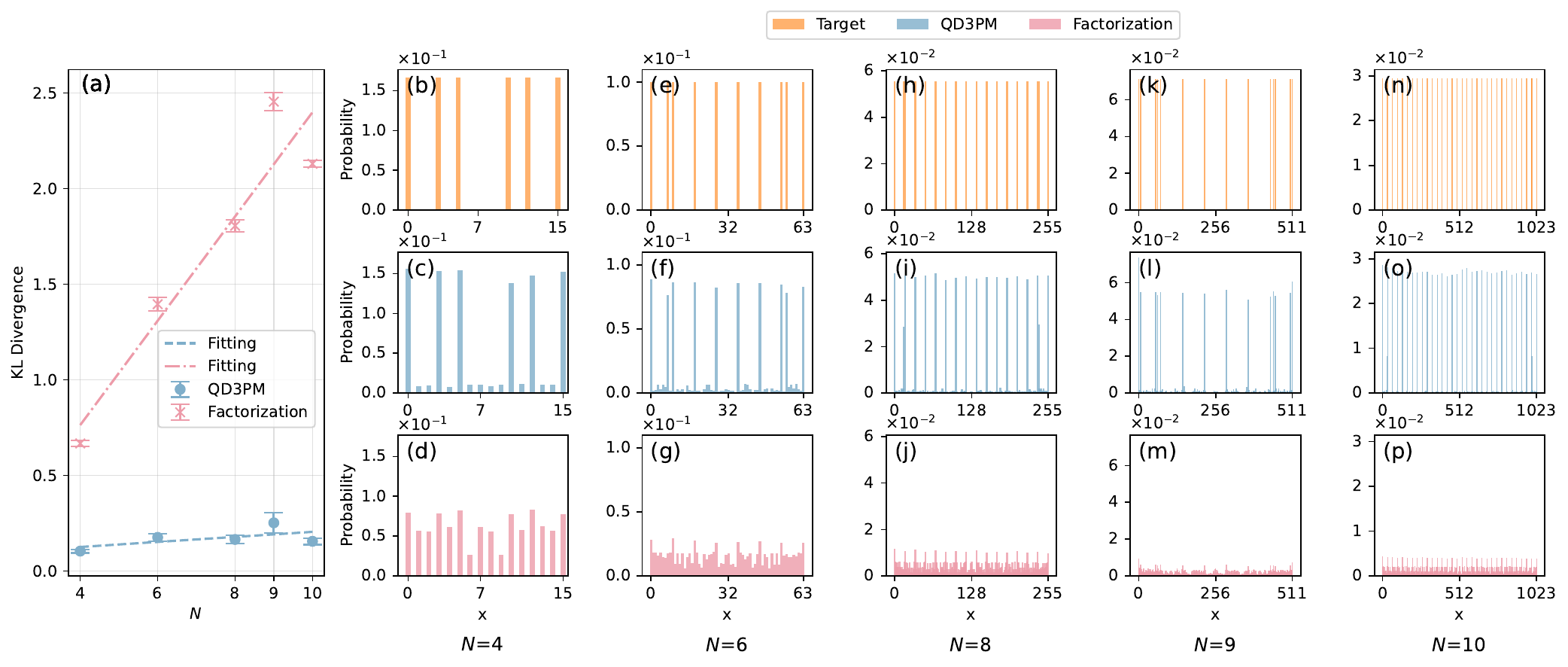}
\caption{Numerical results of QD3PM and its factorized version fitting BAS dataset. (a) The trend in KL divergence between generated and target distributions for different models as $N$ increases. (b)-(p) Visualizing Different Distributions:
The top row displays the BAS target distribution.
The middle row presents the QD3PM-generated distribution.
The bottom row shows the factorization method's generated distribution.}
    \label{fig:N_KL_compare}
\end{figure*}

\subsubsection{Training Circuit as $p_{\boldsymbol{\theta}}(\boldsymbol{x}_{t-1}|\boldsymbol{x}_t)$}

\textbf{Mixed Gaussian Dataset.}
\cref{fig:8gaussian_kl_sample} showcases the performance of QD3PM in modeling mixed Gaussian distributions for systems with $N = {4, 6, 8, 9}$ and 10 qubits, contrasting its results with those from a factorization-based method. \cref{fig:8gaussian_kl_sample}(a) plots the KL divergence from the target distribution for both QD3PM and the factorization method as a function of the qubit count $N$. Error bars represent standard deviations, and dashed lines depict the fitted line. To more directly visualize the fitting performance of the two models, subfigures (b)-(k) show the sampling distributions of QD3PM and the factorization method after training for different $N$.

\cref{fig:8gaussian_kl_sample}(a) demonstrates that QD3PM consistently fits the target distribution well across variant $N$, maintaining an average KL divergence below 0.01 with no significant increasing trend as $N$ grows. This indicates that by modeling the joint distribution, QD3PM effectively captures the structural information of the data at different scales, thus achieving excellent fitting results. In contrast, factorization-based methods exhibit KL divergences around or above 0.04, with a slight tendency to increase with $ N $. This is because these methods disrupt the inherent correlations (or dependencies) within the data.

Figures \ref{fig:8gaussian_kl_sample}(b)-(k) display the fitting results for the target distribution by QD3PM and the factorization method for $N$=4, 6, 8, 9, and 10. In each subfigure, the orange dashed line represents the probability mass function of the target distribution. It is evident that QD3PM (first row) performs well in all cases, closely approximating the target distribution. Conversely, the factorization method consistently shows insufficient peak heights and higher probabilities in the tails of the x-axis compared to the target distribution. The distributions learned by factorization methods are more uniform than those learned by QD3PM, a consequence of their failure to model the joint distribution. These results highlight the advantages of leveraging quantum computing for modeling joint distributions.

\textbf{Bars-and-Stripes (BAS) Dataset.}
To further demonstrate the advantages of modeling the joint distribution, we conducted numerical experiments on the BAS dataset, which is characterized by stronger inter-dimensional correlations than the mixed Gaussian dataset.

\cref{fig:N_KL_compare}(a) showcases the KL divergence trends between QD3PM and its factorization-based variant across varying problem sizes $N$. Error bars represent standard deviations, and dashed lines depict the fitted linear equations. Subfigures (b)-(p) display the BAS target distribution (first row, orange bars), the distributions generated by QD3PM post-training (second row, blue bars), and those generated by the factorization method (third row, pink bars) for different qubit count $N$.

\begin{figure*}[t]
	\centering
	\includegraphics[width=\linewidth]{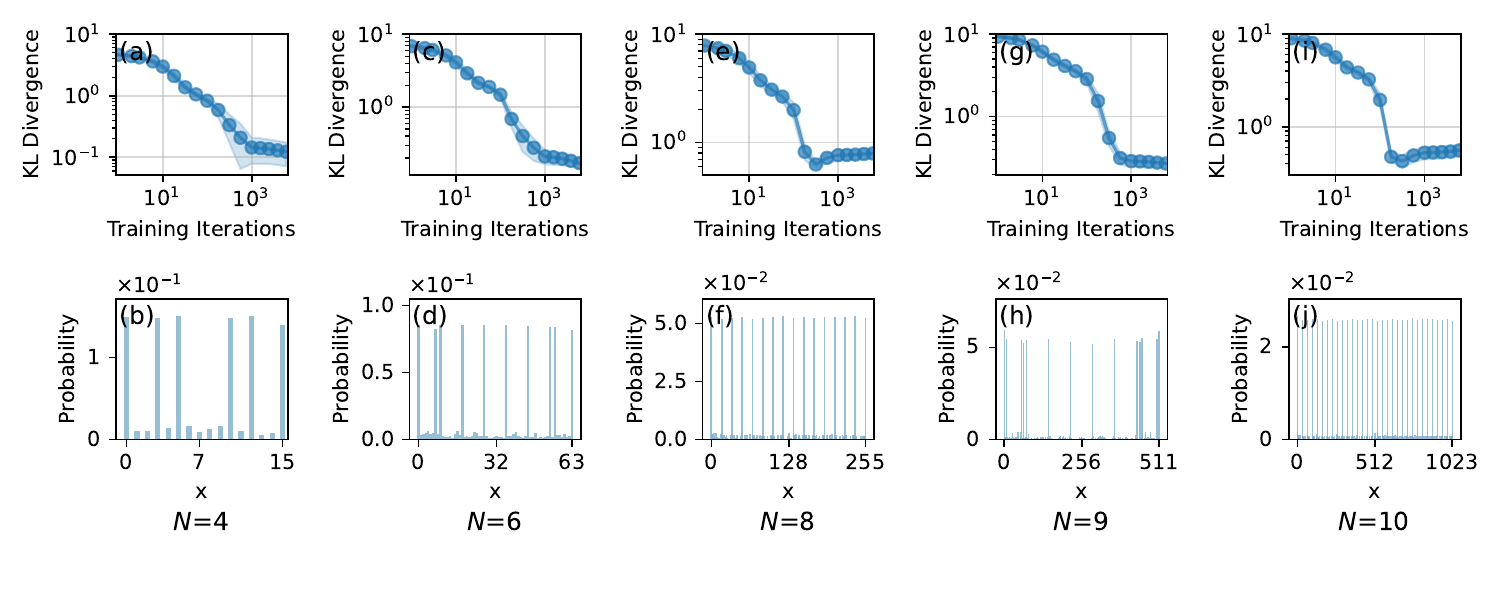}
	\caption{Performance evaluation of the QD3PM denoising circuit for one-step inference on the BAS generation tasks. The first row depicts the convergence trajectories of the KL divergence between the learned and target distributions throughout training. The second row visualizes the distribution obtained via one-step inference from the trained circuit, represented by $p_{\boldsymbol{\theta}}(\boldsymbol{x}_0|\boldsymbol{x}_t)$, at diffusion timestep $t=T$.}
	\label{fig: one-step inference}
\end{figure*}

As shown in \cref{fig:N_KL_compare}(a), QD3PM maintains a low KL divergence value. Benefiting from modeling the joint distribution, the KL error does not exhibit a significant increasing trend with $N$. 
Conversely, the KL fitting divergence for the factorization-based method increases approximately linearly with $N$. This is due to the strong correlations between dimensions in the BAS pattern data. Consequently, the trend of increasing fitting error with $N$ is considerably steeper for the BAS dataset compared to the mixed Gaussian distribution results shown in \cref{fig:8gaussian_kl_sample}(a). This is because BAS data exhibits stronger internal correlations than mixed Gaussian data. This result finding aligns with our theoretical conclusions in \cref{thm:KL_bounds}.

The visualizations in Figures \ref{fig:N_KL_compare}(b)-(p) further highlight QD3PM's advantage in fitting the BAS distribution compared to the factorization method, because it models the joint distribution. Notably, QD3PM maintains high fitting accuracy even at $N=10$. In contrast, the factorization-based method, while sometimes assigning higher probabilities to specific BAS patterns, frequently generates non-BAS patterns. This deficiency arises from its failure to model the joint distribution.


\subsubsection{Training Circuit as $p_{\boldsymbol{\theta}}(\boldsymbol{x}_{0}|\boldsymbol{x}_t)$}

\cref{fig: one-step inference} evaluates our QD3PM on the BAS generation task across $N=4, 6, 8, 9$, and 10. The first row displays the training convergence trajectories of the KL divergence. The second row displays the $\boldsymbol{x}_0$ distribution generated by one-step inference from the pure noise distribution $p(\boldsymbol{x}_T)$, using the trained quantum circuit that models $p_{\boldsymbol{\theta}}(\boldsymbol{x}_0|\boldsymbol{x}_T)$. Notably, QD3PM effectively fits the target distribution even for $N=10$. This successful one-step generation demonstrates the circuit's capability to capture the data's underlying structural features and distribution, which is facilitated by QD3PM's preservation of intrinsic data correlations during training.


\section{Discussion} \label{section: conclusion}

We have developed the Quantum Discrete Denoising Diffusion Probabilistic Model (QD3PM) to address the intrinsic limitations of classical discrete diffusion models in handling high-dimensional correlations. By rigorously analyzing the dimension-factorized approach in existing methods, we have proved that their worst-case KL divergence scales as $\mathcal{O}(N\log K)$ with data dimensionality---a fundamental barrier arising from independent per-dimension probability modeling. QD3PM overcomes this limitation through quantum-enhanced joint distribution learning, where the exponentially large Hilbert space inherently preserves inter-dimensional correlations while avoiding the computational bottlenecks with the existing models.

Within the quantum framework, we have established three key methodological advances: 1) A quantum Bayes' theorem formulation for posterior derivation that fundamentally differs from existing dimension-wise computation, enabling accurate modeling of joint probability evolution during diffusion; 2) An efficient quantum denoising architecture combining temporal parameter sharing with classical-data-controlled rotation gates, achieving effective information encoding while maintaining practical parameter efficiency; and 3) A framework for training quantum circuits for single-step sampling, established by leveraging the linearity of quantum operations and quantum measurement properties, thereby avoiding inefficient iterative sampling during the generation. Our simulations of up to 10-qubit systems demonstrate QD3PM's superior distribution fitting accuracy compared to the factorizing-based methods. 

Our work explores the potential of quantum computing to address the dimensional factorization limitations in classical discrete diffusion models. Through the QD3PM framework, we demonstrate that quantum-enhanced joint distribution modeling can effectively preserve inter-dimensional correlations, which are often lost in classical approaches. Limited by the difficulty of classically simulating quantum systems, our preliminary results are based on systems of up to 10 qubits, yet they suggest that quantum generative models hold promise for improving discrete diffusion-based generation. We hope that this framework can serve as a theoretical foundation and a practical stepping stone for further investigations into quantum-enhanced generative learning.

\section{Methods}

\subsection{Proof of  \cref{thm:KL_bounds}}
\label{section: proof of KL}

\begin{proof}
	
	The validity of \cref{eq:kl with quantum} is evident, as our quantum model directly captures the joint probability distribution $Q_{q}(\boldsymbol{x}) := Q_{q}(x^{(1)}, \dots, x^{(N)})$. Ideally, this model can accurately fit the target distribution $P$, ensuring the correctness of the equation.  
	Next, we prove the \cref{eq: kl with classical}:

	\textbf{1.} Given $ Q_c(\boldsymbol{x}) = \prod_{i=1}^{N} Q_c(x^{(i)}) $ and the assumption $ Q_c(x^{(i)}) = P(x^{(i)}) $, we have:
	\begin{equation}  
		Q_c(\boldsymbol{x}) = \prod_{i=1}^{N} P(x^{(i)}).
	\end{equation}
	Substituting into the KL divergence:
	\begin{equation}
		\label{eq: DKL_P_Qc}
		\begin{split}
			D_{\text{KL}}(P \| Q_c) &= \sum_{\boldsymbol{x}} P(\boldsymbol{x}) \log \frac{P(\boldsymbol{x})}{\prod_{i=1}^{N} P(x^{(i)})}.\\
			&= \sum_{\boldsymbol{x}} P(\boldsymbol{x}) \log P(\boldsymbol{x}) - \sum_{\boldsymbol{x}} P(\boldsymbol{x}) \sum_{i=1}^{N} \log P(x^{(i)})\\
			&= -H(P(\boldsymbol{x})) - \sum_{i=1}^{N} \sum_{\boldsymbol{x}} P(\boldsymbol{x}) \log P(x^{(i)}),
		\end{split}
	\end{equation} 
	where $ H(P(\boldsymbol{x})) = \sum_{\boldsymbol{x}} P(\boldsymbol{x}) \log \frac{1}{P(\boldsymbol{x})} $ is the entropy of $ P(\boldsymbol{x}) $. Recognizing that:
	\begin{equation}
		\begin{split}
					\sum_{\boldsymbol{x}} P(\boldsymbol{x}) \log P(x^{(i)}) &= \sum_{x^{(i)}} P(x^{(i)}) \log P(x^{(i)}) \\
					&= -H(P(x^{(i)})).
		\end{split}
	\end{equation}
	Thus, we have:
	\begin{equation}
		D_{\text{KL}}(P \| Q_c) = -H(P(\boldsymbol{x})) + \sum_{i=1}^{N} H(P(x^{(i)})).
	\end{equation}
	By the definition of mutual information:
	\begin{equation}
		I(x^{(1)}; x^{(2)}; \cdots; x^{(N)}) = \sum_{i=1}^{N} H(P(x^{(i)})) - H(P(\boldsymbol{x})).
	\end{equation}
	Thus:
	\begin{equation} \begin{split} 
			D_{\text{KL}}(P \| Q_c) &= I(x^{(1)}; x^{(2)}; \cdots; x^{(N)})\\
			&= \sum_{i=1}^{N} H(P(x^{(i)})) - H(P(\boldsymbol{x})).
	\end{split} \end{equation}

	\textbf{2.} Mutual information $ I(X_1; X_2; \cdots; X_N) $ is maximized when the sum of marginal entropies is maximized and the joint entropy is minimized. For each dimension $ i $, the marginal entropy $ H(P(x^{(i)})) $ will reach its maximum when $ P(x^{(i)}) $ is a uniform distribution. In this case, the marginal entropy has an upper bound:
	\begin{equation}
		H(P(x^{(i)})) = \log K.
	\end{equation}
	Therefore, the sum of marginal entropies is:
	\begin{equation}
		\sum_{i=1}^{N} H(P(x^{(i)})) =  N \log K.
	\end{equation}
	Given that $ P(\boldsymbol{x}) = \frac{1}{K}, \boldsymbol{x} \in \mathcal{D}_{\text{FC}} $, we have:
	\begin{equation}
		H(P(\boldsymbol{x})) = \log K.
	\end{equation}
	Thus, we have:
	\begin{equation} \begin{split} 
			D_{\text{KL}}(P \| Q_c) &= I(x^{(1)}; x^{(2)}; \cdots; x^{(N)}) \\
			&= \sum_{i=1}^{N} H(x^{(i)}) - H(P) \\ 
			&= N \log K - \log K \\
			&= (N - 1) \log K.
	\end{split} \end{equation} 
	This completes the proof.
\end{proof}


\subsection{Proof of  \cref{thm:non-unitary posterior}}
\label{section: proof of posterior}


\begin{proof}
	1. Given that $ \rho_t = (1 - \alpha_t) \boldsymbol{I}/d + \alpha_t \rho_{t-1} $, we can express the Choi-Jamio\l{}kowski isomorphic operator of the depolarizing channel as:
	\begin{equation}
		\varrho_{\mathcal{H}_t|\mathcal{H}_{t-1}} = (\mathcal{E}_{\mathcal{H}_{t}|\mathcal{H}_{t-1}^{'}} \otimes \boldsymbol{I}_{\mathcal{H}_{t-1}} )(|\Phi^{+}\rangle\langle\Phi^{+}|_{\mathcal{H}^{'}_{t-1}\mathcal{H}_{t-1}}).
	\end{equation}
	Since $ |\Phi^{+}\rangle\langle\Phi^{+}|_{\mathcal{H}^{'}_{t-1}\mathcal{H}_{t-1}} = \frac{1}{d} \sum_{\boldsymbol{i},\boldsymbol{j}=0}^{d-1} |\boldsymbol{i}\rangle \langle \boldsymbol{j}|_{\mathcal{H}^{'}_{t-1}} \otimes |\boldsymbol{i}\rangle \langle \boldsymbol{j}|_{\mathcal{H}_{t-1}} $, and by the definition of a maximally entangled state, the reduced density matrix on the subsystem $ \mathcal{H}_{t-1} $ is:
	\begin{equation}
		\text{Tr}_{\mathcal{H}^{'}_{t-1}}\left(|\Phi^{+}\rangle\langle\Phi^{+}|_{\mathcal{H}^{'}_{t-1}\mathcal{H}_{t-1}}\right) = \frac{\boldsymbol{I}_{\mathcal{H}_{t-1}}}{d}.
	\end{equation}
	Therefore, the operator $ \varrho_{\mathcal{H}_t|\mathcal{H}_{t-1}} $ simplifies to:
	\begin{equation}
		\varrho_{\mathcal{H}_t|\mathcal{H}_{t-1}} = \alpha_t |\Phi^{+}\rangle\langle \Phi^{+}|_{\mathcal{H}_t \mathcal{H}_{t-1}} + (1 - \alpha_t) \frac{\boldsymbol{I}_{\mathcal{H}_{t}}}{d} \otimes \frac{\boldsymbol{I}_{\mathcal{H}_{t-1}}}{d}.
	\end{equation}
	
	2. Since our work involves measuring $ \rho_t $ in the computational basis at each timestep, the process of obtaining the random variable $ X_t $ from $ \rho_t $ can be described by a positive causal conditional state operator:
	\begin{equation}
		\varrho_{X_t|\mathcal{H}_{t}} = \sum_{\boldsymbol{x}=0}^{d-1} |\boldsymbol{x}\rangle\langle \boldsymbol{x}|_{X_t} \otimes |\boldsymbol{x}\rangle\langle \boldsymbol{x}|_{\mathcal{H}_t}.
	\end{equation}
	
	3. Therefore, the causal conditional state from $ \rho_{t-1} $ to $ X_t $ is:
	\begin{equation} 
		\begin{split}
			\varrho_{X_t|\mathcal{H}_{t-1}} 
			&= \text{Tr}_{\mathcal{H}_t}\left( \varrho_{X_t|\mathcal{H}_t} \cdot \varrho_{\mathcal{H}_t|\mathcal{H}_{t-1}} \right) \\
			&= \text{Tr}_{\mathcal{H}_t}\Bigg[ \sum_{\boldsymbol{x}=0}^{d-1} |\boldsymbol{x}\rangle \langle \boldsymbol{x}|_{X_t} 
			\otimes \Bigg( \frac{\alpha_t}{d} \sum_{\boldsymbol{i},\boldsymbol{j}=0}^{d-1} |\boldsymbol{i}\rangle \langle \boldsymbol{j}|_{\mathcal{H}_t}  \\
			& \otimes |\boldsymbol{i}\rangle \langle \boldsymbol{j}|_{\mathcal{H}_{t-1}}+ (1 - \alpha_t) \frac{\boldsymbol{I}_{\mathcal{H}_t}}{d} \otimes \frac{\boldsymbol{I}_{\mathcal{H}_{t-1}}}{d} \Bigg) \Bigg]
		\end{split}
	\end{equation}
	where
	\begin{equation} \begin{split} 
			&\left( \sum_{\boldsymbol{x}=0}^{d-1} |\boldsymbol{x}\rangle\langle \boldsymbol{x}|_{X_t} \otimes |\boldsymbol{x}\rangle\langle \boldsymbol{x}|_{\mathcal{H}_t} \right) \left( \frac{\alpha_t}{d} \sum_{\boldsymbol{i},\boldsymbol{j}=0}^{d-1} |\boldsymbol{i}\rangle \langle \boldsymbol{j}|_{\mathcal{H}_t} \otimes |\boldsymbol{i}\rangle \langle \boldsymbol{j}|_{\mathcal{H}_{t-1}} \right) \\
			&= \sum_{\boldsymbol{x},\boldsymbol{i},\boldsymbol{j}=0}^{d-1} |\boldsymbol{x}\rangle\langle \boldsymbol{x}|_{X_t} \otimes \frac{\alpha_t}{d} |\boldsymbol{x}\rangle\langle \boldsymbol{x}| \boldsymbol{i}\rangle \langle \boldsymbol{j}|_{\mathcal{H}_t} \otimes |\boldsymbol{i}\rangle \langle \boldsymbol{j}|_{\mathcal{H}_{t-1}} \\
			&= \sum_{\boldsymbol{x},\boldsymbol{i},\boldsymbol{j}=0}^{d-1} |\boldsymbol{x}\rangle\langle \boldsymbol{x}|_{X_t} \otimes \delta_{\boldsymbol{x},\boldsymbol{i}} \frac{\alpha_t}{d} |\boldsymbol{x}\rangle \langle \boldsymbol{j}|_{\mathcal{H}_t} \otimes |\boldsymbol{x}\rangle \langle \boldsymbol{j}|_{\mathcal{H}_{t-1}} \\
			&= \sum_{\boldsymbol{x},\boldsymbol{j}=0}^{d-1} |\boldsymbol{x}\rangle\langle \boldsymbol{x}|_{X_t} \otimes \frac{\alpha_t}{d} |\boldsymbol{x}\rangle \langle \boldsymbol{j}|_{\mathcal{H}_t} \otimes |\boldsymbol{x}\rangle \langle \boldsymbol{j}|_{\mathcal{H}_{t-1}}.
	\end{split} \end{equation} 
	Another part is
	\begin{equation} \begin{split} 
			&\left( \sum_{\boldsymbol{x}=0}^{d-1} |\boldsymbol{x}\rangle\langle \boldsymbol{x}|_{X_t} \otimes |\boldsymbol{x}\rangle\langle \boldsymbol{x}|_{\mathcal{H}_t} \right) \cdot \left( (1 - \alpha_t) \frac{\boldsymbol{I}_{\mathcal{H}_{t}}}{d} \otimes \frac{\boldsymbol{I}_{\mathcal{H}_{t-1}}}{d} \right) \\
			&= \frac{(1 - \alpha_t)}{d^2} \sum_{\boldsymbol{x}=0}^{d-1} |\boldsymbol{x}\rangle\langle \boldsymbol{x}|_{X_t} \otimes |\boldsymbol{x}\rangle\langle \boldsymbol{x}|_{\mathcal{H}_t} \otimes \boldsymbol{I}_{\mathcal{H}_{t-1}}.
	\end{split} \end{equation} 
	Therefore, we obtain:
	\begin{equation} 
		\begin{split}
			\varrho_{X_t|\mathcal{H}_{t-1}} 
			&= \text{Tr}_{\mathcal{H}_t} \Biggl[ 
			\sum_{\boldsymbol{x},\boldsymbol{j}=0}^{d-1} |\boldsymbol{x}\rangle\langle \boldsymbol{x}|_{X_t} \otimes \frac{\alpha_t}{d} |\boldsymbol{x}\rangle \langle \boldsymbol{j}|_{\mathcal{H}_t} \otimes |\boldsymbol{x}\rangle \langle \boldsymbol{j}|_{\mathcal{H}_{t-1}} \\
			&\quad + \frac{(1 - \alpha_t)}{d^2} \sum_{\boldsymbol{x}=0}^{d-1} |\boldsymbol{x}\rangle\langle \boldsymbol{x}|_{X_t} \otimes |\boldsymbol{x}\rangle\langle \boldsymbol{x}|_{\mathcal{H}_t} \otimes \boldsymbol{I}_{\mathcal{H}_{t-1}} \Biggr] \\
			&= \frac{\alpha_t}{d} \sum_{\boldsymbol{x}=0}^{d-1} |\boldsymbol{x}\rangle\langle \boldsymbol{x}|_{X_t} \otimes |\boldsymbol{x}\rangle\langle \boldsymbol{x}|_{\mathcal{H}_{t-1}} \\
			&\quad + \frac{(1 - \alpha_t)}{d^2} \sum_{\boldsymbol{x}=0}^{d-1} |\boldsymbol{x}\rangle\langle \boldsymbol{x}|_{X_t} \otimes \boldsymbol{I}_{\mathcal{H}_{t-1}} \\
			&= \sum_{\boldsymbol{x}=0}^{d-1} |\boldsymbol{x}\rangle\langle \boldsymbol{x}|_{X_t} \otimes \left( \frac{\alpha_t}{d} |\boldsymbol{x}\rangle\langle \boldsymbol{x}|_{\mathcal{H}_{t-1}} + \frac{(1 - \alpha_t)}{d^2} \boldsymbol{I}_{\mathcal{H}_{t-1}} \right).
		\end{split} 
	\end{equation}
	Given that under the condition $ X_t = \boldsymbol{x}_t $:
	\begin{equation}
		\varrho_{X_t=\boldsymbol{x}_t|\mathcal{H}_{t-1}} = \frac{\alpha_t}{d} |\boldsymbol{x}_t\rangle\langle \boldsymbol{x}_t|_{\mathcal{H}_{t-1}} + \frac{(1 - \alpha_t)}{d^2} \boldsymbol{I}_{\mathcal{H}_{t-1}}.
	\end{equation}
	The probability distribution of the measurement result $ X_t $ is determined by $ \varrho_{X_t|\mathcal{H}_{t-1}} $ and $ \rho_{t-1} $:
	\begin{equation} \begin{split} 
			\rho_{X_t} &= \text{Tr}_{\mathcal{H}_{t-1}} \left[ \varrho_{X_t|\mathcal{H}_{t-1}} \cdot (\boldsymbol{I}_{X_t} \otimes \rho_{t-1}) \right] \\
			&= \text{Tr}_{\mathcal{H}_{t-1}} \left[ \sum_{\boldsymbol{x}=0}^{d-1} |\boldsymbol{x}\rangle\langle \boldsymbol{x}|_{X_t} \otimes \left( \frac{\alpha_t}{d} |\boldsymbol{x}\rangle\langle \boldsymbol{x}| \rho_{t-1} + \frac{(1 - \alpha_t)}{d^2} \rho_{t-1} \right) \right] \\
			&= \sum_{\boldsymbol{x}=0}^{d-1} \left( \alpha_t \langle \boldsymbol{x} | \rho_{t-1} | \boldsymbol{x} \rangle + \frac{1 - \alpha_t}{d^2} \right) |\boldsymbol{x}\rangle\langle \boldsymbol{x}| \\
			&= \sum_{\boldsymbol{x}=0}^{d-1} p(X_t = \boldsymbol{x}) |\boldsymbol{x}\rangle\langle \boldsymbol{x}|.
	\end{split} \end{equation} 
	For each measurement result $ X_t = \boldsymbol{x}_t $, the probability is:
	\begin{equation} \begin{split} 
			\rho_{X_t=\boldsymbol{x}_t} &= p(X_t = \boldsymbol{x}_t) \\
			&= \alpha_t \langle \boldsymbol{x}_t | \rho_{t-1} | \boldsymbol{x}_t \rangle + \frac{1 - \alpha_t}{d^2}.
	\end{split} \end{equation} 
	
	4. Therefore, we obtain the final result:
	\begin{equation} \begin{split}  
			\tilde{\rho}_{t-1|X_t=\boldsymbol{x}_t} &= \varrho_{X_t=\boldsymbol{x}_t|\mathcal{H}_{t-1}} \star \left( \rho_{t-1} \rho_{X_t=\boldsymbol{x}_t}^{-1} \right) \\
			&= \frac{ \sqrt{\rho_{t-1}} \left( \alpha_t |\boldsymbol{x}_t\rangle\langle \boldsymbol{x}_t| + \frac{(1 - \alpha_t)}{d^2} \boldsymbol{I} \right) \sqrt{\rho_{t-1}} }{ \alpha_t \langle \boldsymbol{x}_t | \rho_{t-1} | \boldsymbol{x}_t \rangle + (1 - \alpha_t) \frac{1}{d^2} } \\   
			&= \frac{ \alpha_t \langle \boldsymbol{x}_t | \rho_{t-1} | \boldsymbol{x}_t \rangle |\boldsymbol{x}_t\rangle\langle \boldsymbol{x}_t| + \frac{(1 - \alpha_t)}{d^2} \rho_{t-1} }{ \alpha_t \langle \boldsymbol{x}_t | \rho_{t-1} | \boldsymbol{x}_t \rangle + (1 - \alpha_t) \frac{1}{d^2} }.
	\end{split} \end{equation}  
This completes the proof.
	
\end{proof}

\subsection{Derivation of the Circuit Architecture to Generate the Posterior State.}
\label{section: pose circuit derivation}

Here, we prove that the circuit shown in \cref{fig:pos_circuit1} yields our target ground-truth posterior state. The composite state after applying the $ R_y $ gate is given by:

Set:
\begin{equation}
	\phi = 2 \arccos \left( \sqrt{ \frac{ \alpha_t \langle \boldsymbol{x}_t | \rho_{t-1} | \boldsymbol{x}_t \rangle }{ \alpha_t \langle \boldsymbol{x}_t | \rho_{t-1} | \boldsymbol{x}_t \rangle + (1 - \alpha_t) \frac{1}{d^2} } } \right).
\end{equation}
The input quantum state is:
\begin{equation}
	|0\rangle\langle 0| \otimes \rho_{t-1} \otimes |\boldsymbol{x}_t\rangle\langle \boldsymbol{x}_t|.
\end{equation}
After applying the $ R_y $ gate, the quantum state becomes:
\begin{equation}
	\begin{split}
	&\cos^2\left(\frac{\phi}{2}\right) |0\rangle\langle 0| \otimes \rho_{t-1} \otimes |\boldsymbol{x}_t\rangle\langle \boldsymbol{x}_t| \\
	&+ \sin^2\left(\frac{\phi}{2}\right) |1\rangle\langle 1| \otimes \rho_{t-1} \otimes |\boldsymbol{x}_t\rangle\langle \boldsymbol{x}_t| + \cdots,
	\end{split}
\end{equation}
where the \textquotedblleft$\cdots$\textquotedblright~terms are omitted because the subsequent partial trace operation will exclude them from the result.
After applying the controlled swap gate, the quantum state becomes:
\begin{equation}
	\begin{split}
	&\cos^2\left(\frac{\phi}{2}\right) |0\rangle\langle 0| \otimes \rho_{t-1} \otimes |\boldsymbol{x}_t\rangle\langle \boldsymbol{x}_t|\\
	 &+ \sin^2\left(\frac{\phi}{2}\right) |1\rangle\langle 1| \otimes |\boldsymbol{x}_t\rangle\langle \boldsymbol{x}_t| \otimes \rho_{t-1} + \cdots.  
\end{split}
\end{equation}  
Taking the partial trace over registers A and B, we obtain:
\begin{equation}
	\begin{split}
		&\text{Tr}_{A,B}\left( \cos^2\left(\frac{\phi}{2}\right) |0\rangle\langle 0| \otimes \rho_{t-1} \otimes |\boldsymbol{x}_t\rangle\langle \boldsymbol{x}_t| \right. \\
		&\quad\quad\quad \left. {} + \sin^2\left(\frac{\phi}{2}\right) |1\rangle\langle 1| \otimes |\boldsymbol{x}_t\rangle\langle \boldsymbol{x}_t| \otimes \rho_{t-1} + \cdots \right) \\
		&= \cos^2\left(\frac{\phi}{2}\right) |\boldsymbol{x}_t\rangle\langle \boldsymbol{x}_t| + \sin^2\left(\frac{\phi}{2}\right) \rho_{t-1}.
	\end{split}
\end{equation}

Thus, we have:
\begin{equation}
	\begin{split}
		&\text{Tr}_{A,B}\left( \cos^2\left(\frac{\phi}{2}\right) |0\rangle\langle 0| \otimes \rho_{t-1} \otimes |\boldsymbol{x}_t\rangle\langle \boldsymbol{x}_t| \right. \\ 
		&\quad \quad \quad \left. + \sin^2\left(\frac{\phi}{2}\right) |1\rangle\langle 1| \otimes |\boldsymbol{x}_t\rangle\langle \boldsymbol{x}_t| \otimes \rho_{t-1} + \cdots \right) \\ 
		&= \cos^2\left(\frac{\phi}{2}\right) |\boldsymbol{x}_t\rangle\langle \boldsymbol{x}_t| + \left(1 - \cos^2\left(\frac{\phi}{2}\right)\right) \rho_{t-1} \\
		&= \frac{ \alpha_t \langle \boldsymbol{x}_t | \rho_{t-1} | \boldsymbol{x}_t \rangle |\boldsymbol{x}_t\rangle\langle \boldsymbol{x}_t| }{ \alpha_t \langle \boldsymbol{x}_t | \rho_{t-1} | \boldsymbol{x}_t \rangle + (1 - \alpha_t) \frac{1}{d^2} } \\
		& \quad	+  \frac{ \frac{(1 - \alpha_t)}{d^2} \rho_{t-1} }{ \alpha_t \langle \boldsymbol{x}_t | \rho_{t-1} | \boldsymbol{x}_t \rangle + (1 - \alpha_t) \frac{1}{d^2} } \\
		&= \frac{ \alpha_t \langle \boldsymbol{x}_t | \rho_{t-1} | \boldsymbol{x}_t \rangle |\boldsymbol{x}_t\rangle\langle \boldsymbol{x}_t| + \frac{(1 - \alpha_t)}{d^2} \rho_{t-1} }{ \alpha_t \langle \boldsymbol{x}_t | \rho_{t-1} | \boldsymbol{x}_t \rangle + (1 - \alpha_t) \frac{1}{d^2} }.
	\end{split}
\end{equation}

\subsection{Training and Generation Algorithms}

\label{section: training and generation algorithms}
The process of QD3PM training is outlined in Algorithm~\ref{alg:training}. After training, the optimal parameters are used in Algorithm~\ref{alg:generation} to generate clean data $ \boldsymbol{x}_0 $. 
\begin{algorithm}[H]
	\caption{QD3PM training.}  
	\label{alg:training}
	\begin{algorithmic}[1] 
		\STATE \parbox[t]{0.9\linewidth}{
			\textbf{Input:} Number of qubits $N$, total timestep of the diffusion process $T$, completely mixed state $\rho_T=\boldsymbol{I}/d$, batch size $\text{BS}$, maximum iteration number, learning rate $\eta$, trainable parameters $\boldsymbol{\omega} $ and $ \boldsymbol{v}$, noise schedule parameters $\{\bar{\alpha}_t\}_{t=1}^{T}$.
		}
		\REPEAT 
		\STATE Sample a mini-batch data from dataset  $ \{\boldsymbol{x}_0\} \sim \mathcal{D}_{\text{train}} $  
		\STATE $ \mathcal{L} \leftarrow 0 $
		
		\FOR{Each $ \boldsymbol{x}_0 $}
		\STATE $ t\sim \mathcal{U}(2, T) $
		\STATE $|\boldsymbol{x}_0\rangle \leftarrow \boldsymbol{x}_0$
		\STATE $ \rho_{t-1} \leftarrow (1-\bar{\alpha}_{t-1})\boldsymbol{I}/d + \bar{\alpha}_{t-1}|\boldsymbol{x}_{0}\rangle\langle \boldsymbol{x}_{0}| $
		\STATE $ \rho_t \leftarrow (1-\bar{\alpha}_t)\boldsymbol{I}/d + \bar{\alpha}_t|\boldsymbol{x}_{0}\rangle\langle \boldsymbol{x}_{0}| $
		\STATE  Sample a noisy data $\boldsymbol{x}_t\sim q(\boldsymbol{x}_{t}|\boldsymbol{x}_{0})=\text{Tr}(|\boldsymbol{x}_{t}\rangle\langle\boldsymbol{x}_{0}|\rho_t) $
		\STATE $ |\psi(t, \boldsymbol{x}_t)\rangle \leftarrow R_y\left(\frac{2\pi \cdot t}{T+1}\right)|0\rangle \otimes_{i=1}^{N} R_y\left(w_i \cdot x_t^{(i)}\right)|0\rangle $
		\STATE $ |\psi_{\text{out}}\rangle \leftarrow U(\boldsymbol{v})|\psi(t, \boldsymbol{x}_t)\rangle $
		\STATE $ \rho_{\text{out}} \leftarrow \text{Tr}_1(|\psi_{\text{out}}\rangle\langle\psi_{\text{out}}|) $
		\STATE Model distribution $ p_{\boldsymbol{\theta}}(\boldsymbol{x}_{t-1}|\boldsymbol{x}_{t})=\text{Tr}(|\boldsymbol{x}_{t-1}\rangle\langle\boldsymbol{x}_{t-1}|\rho_{\text{out}}) $ \quad $ \# \boldsymbol{\theta}=\{\boldsymbol{w},\boldsymbol{v}\} $
		\STATE Compute posterior state $ \tilde{\rho}_{t-1} 
		\leftarrow \frac{\alpha_t \langle \boldsymbol{x}_t | \rho_{t-1} | \boldsymbol{x}_t \rangle |\boldsymbol{x}_t\rangle \langle \boldsymbol{x}_t| + \frac{1}{d^2}(1-\alpha_t) \rho_{t-1}}{\alpha_t \langle \boldsymbol{x}_t | \rho_{t-1} | \boldsymbol{x}_t \rangle + \frac{1}{d^2} (1-\alpha_t)} $
		\STATE Ground-truth distribution $ q(\boldsymbol{x}_{t-1} | \boldsymbol{x}_t, \boldsymbol{x}_0)=\text{Tr}(|\boldsymbol{x}_{t-1}\rangle\langle\boldsymbol{x}_{t-1}|\tilde{\rho}_{t-1} ) $
		\STATE  $ \mathcal{L}_{t-1}  \leftarrow \mathcal{L}_{\mathrm{MMD}}(p_{\boldsymbol{\theta}}(\boldsymbol{x}_{t-1}|\boldsymbol{x}_{t}),q(\boldsymbol{x}_{t-1} | \boldsymbol{x}_t, \boldsymbol{x}_0))$  
		\STATE Use the procedure outlined in lines 8 through 14 to compute $ \mathcal{L}_{0} \leftarrow \mathcal{L}_{\mathrm{MMD}}(p_{\boldsymbol{\theta}}(\boldsymbol{x}_{0}|\boldsymbol{x}_{1}),\text{Tr}(|\boldsymbol{x}'_0\rangle\langle\boldsymbol{x}'_0||\boldsymbol{x}_0\rangle\langle\boldsymbol{x}_0|))$
		\STATE $ \mathcal{L} \leftarrow \mathcal{L} + \frac{1}{BS}(\mathcal{L}_0+\mathcal{L}_{t-1})$
		\ENDFOR
		\STATE $\boldsymbol{w}\leftarrow \boldsymbol{w} - \eta \nabla_{\boldsymbol{w}}\mathcal{L};\boldsymbol{v} \leftarrow \boldsymbol{v} - \eta \nabla_{\boldsymbol{v}}\mathcal{L}$
		\UNTIL{$\mathcal{L}$ converges or the number of iterations reaches the maximum}
		
		\STATE \textbf{Output:} Optimal parameters $\boldsymbol{\omega}^{*}, \boldsymbol{v}^{*}$ 
	\end{algorithmic}  
\end{algorithm}

\begin{table*}[t]
	\centering
	\caption{The hyperparameters of training circuit as $p_{\boldsymbol{\theta}}(\boldsymbol{x}_{t-1}|\boldsymbol{x}_t)$.}  
	\label{tab: hyperparameters1}  
	\begin{tabular}{|c|cccccccccc|}
		\hline
		Parameters                                                                & \multicolumn{5}{c|}{BAS dataset}                                                                                                     & \multicolumn{5}{c|}{Mixed Gaussian dataset}                                                                   \\ \hline
		$ N $                                                                     & \multicolumn{1}{c|}{4}    & \multicolumn{1}{c|}{6}  & \multicolumn{1}{c|}{8}   & \multicolumn{1}{c|}{9}   & \multicolumn{1}{c|}{10}  & \multicolumn{1}{c|}{4}  & \multicolumn{1}{c|}{6}  & \multicolumn{1}{c|}{8}   & \multicolumn{1}{c|}{9}   & 10  \\ \hline
		$L$                                                                       & \multicolumn{1}{c|}{12}   & \multicolumn{1}{c|}{14} & \multicolumn{1}{c|}{15}  & \multicolumn{1}{c|}{16}  & \multicolumn{1}{c|}{20}  & \multicolumn{1}{c|}{12} & \multicolumn{1}{c|}{12} & \multicolumn{1}{c|}{8}   & \multicolumn{1}{c|}{10}  & 12  \\ \hline
		Batch size                                                                & \multicolumn{1}{c|}{16}   & \multicolumn{1}{c|}{32} & \multicolumn{1}{c|}{128} & \multicolumn{1}{c|}{256} & \multicolumn{1}{c|}{256} & \multicolumn{1}{c|}{32} & \multicolumn{1}{c|}{64} & \multicolumn{1}{c|}{128} & \multicolumn{1}{c|}{128} & 256 \\ \hline
		Number of training data                                                   & \multicolumn{5}{c|}{\textbackslash{}}                                                                                                & \multicolumn{2}{c|}{5000}                         & \multicolumn{3}{c|}{50000}                                \\ \hline
		\multicolumn{1}{|l|}{Number of samples to estimate $p(\boldsymbol{x}_0)$} & \multicolumn{2}{c|}{10000}                          & \multicolumn{3}{c|}{100000}                                                    & \multicolumn{2}{c|}{10000}                        & \multicolumn{3}{c|}{100000}                               \\ \hline
		Learning rate decay steps                                                 & \multicolumn{5}{c|}{3000}                                                                                                            & \multicolumn{5}{c|}{5000}                                                                                     \\ \hline
		$ T $                                                                     & \multicolumn{10}{c|}{30}                                                                                                                                                                                                                             \\ \hline
		Offset $ s $                                                              & \multicolumn{10}{c|}{0.008}                                                                                                                                                                                                                          \\ \hline
		Training iterations                                                       & \multicolumn{10}{c|}{6000}                                                                                                                                                                                                                           \\ \hline
		Initial learning rate                                                     & \multicolumn{10}{c|}{0.001}                                                                                                                                                                                                                          \\ \hline
		Final learning rate                                                       & \multicolumn{1}{c|}{1e-4} & \multicolumn{9}{c|}{1e-5}                                                                                                                                                                                                \\ \hline
	\end{tabular}
\end{table*}

\begin{algorithm}[H]
	\caption{QD3PM generation.}  
	\label{alg:generation}
	\begin{algorithmic}[1] 
		\STATE \textbf{Input:} Number of qubits $N$, total timestep of the diffusion process $T$, the optimal parameters $\boldsymbol{\omega}^{*}, \boldsymbol{v}^{*}$.
		\STATE Initialize $\boldsymbol{x}_t \sim q(\boldsymbol{x}_T)$ \quad $ \# $ sample from a uniform distribution  
		\FOR{$t$ from $T$ to 1}
		\STATE $ |\psi(t, \boldsymbol{x}_t)\rangle \leftarrow R_y\left(\frac{2\pi \cdot t}{T+1}\right)|0\rangle \otimes_{i=1}^{N} R_y\left(w^{*}_i \cdot x_t^{(i)}\right)|0\rangle $
		\STATE $ |\psi_{\text{out}}\rangle \leftarrow U(\boldsymbol{v}^{*})|\psi(t, \boldsymbol{x}_t)\rangle $
		\STATE $ \rho_{\text{out}} \leftarrow \text{Tr}_1(|\psi_{\text{out}}\rangle\langle\psi_{\text{out}}|) $
		\STATE Measure $ \rho_{\text{out}} $ in the computational basis and obtain a measure outcome $\boldsymbol{x}_t \sim p_{\boldsymbol{\theta}}(\boldsymbol{x}_{t-1}| \boldsymbol{x}_t )=\text{Tr}(|\boldsymbol{x}_{t-1}\rangle\langle\boldsymbol{x}_{t-1}|\rho_{\text{out}}) $
		\ENDFOR
		\STATE \textbf{Output:} The final generation data $ \boldsymbol{x}_{0}$
	\end{algorithmic}  
\end{algorithm}

\subsection{Simulation Setup}
\label{section: simulations setup}
We implement our model using TensorFlow \cite{abadi2016tensorflow} and optimize parameters with the Adam optimizer \cite{kingma2014adam}. Quantum circuit simulations are performed with the Tensorcircuit framework \cite{zhang2023tensorcircuit}. To improve training, we use a cosine learning rate decay strategy \cite{loshchilov2016sgdr} for dynamic adjustment of the learning rate. All simulations are repeated with 5 random seeds to gather statistical features. For the BAS dataset, we assume uniform probability across all patterns. The performance of the 4-bit BAS dataset under different qubit topologies is shown in \cref{section: other connectivity}.
We use the Gaussian initial method \cite{zhang2022escaping} to initialize the trainable parameters in the quantum circuit before training. The hyperparameters are shown in Tables \ref{tab: hyperparameters1} and \ref{tab: hyperparameters2}. Our choice of these parameters is guided by the principle that to avoid overfitting in quantum generative models, it is essential to enhance model expressivity via greater circuit depth and to provide sufficient training data \cite{gili2023quantum}.

We note that the primary objective of this work is to establish a solid theoretical framework for improving discrete diffusion models and showcasing a quantum advantage. Consequently, our simulations are conducted under noiseless conditions, leaving the investigation of the model's performance in noisy environments for future research.

The data in Mixed Gaussian datasets follows a mixed Gaussian distribution: $$	\pi(x) \propto e^{-(1/2)((x - \mu_1)/\nu)^2} + e^{-(1/2)((x - \mu_2)/\nu)^2},$$
where $ x $ takes integer values (encoded into quantum states as bitstrings) from 1 to $ x_{\text{max}} $, with $ x_{\text{max}} = 2^N $. We set $ \nu = \frac{1}{8}x_{\text{max}} $, and the distribution centers are $ \mu_1 = \frac{2}{7}x_{\text{max}} $ and $ \mu_2 = \frac{5}{7}x_{\text{max}} $. 

\begin{table}[h]
	\centering
	\caption{The hyperparameters of training circuit as $p_{\boldsymbol{\theta}}(\boldsymbol{x}_{0}|\boldsymbol{x}_t)$.}  
	\label{tab: hyperparameters2}  
	\begin{tabular}{|c|ccccc|}
		\hline
		$ N $                     & \multicolumn{1}{c|}{4}  & \multicolumn{1}{c|}{6}  & \multicolumn{1}{c|}{8}   & \multicolumn{1}{c|}{9}   & 10  \\ \hline
		$L$                       & \multicolumn{1}{c|}{14} & \multicolumn{1}{c|}{20} & \multicolumn{1}{c|}{22}  & \multicolumn{1}{c|}{25}  & 30  \\ \hline
		Batch size                & \multicolumn{1}{c|}{16} & \multicolumn{1}{c|}{32} & \multicolumn{1}{c|}{128} & \multicolumn{1}{c|}{256} & 256 \\ \hline
		Learning rate decay steps & \multicolumn{5}{c|}{1000}                                                                                     \\ \hline
		$ T $                     & \multicolumn{5}{c|}{30}                                                                                       \\ \hline
		Offset $ s $              & \multicolumn{5}{c|}{0.008}                                                                                    \\ \hline
		Training iterations       & \multicolumn{5}{c|}{6000}                                                                                     \\ \hline
		Initial learning rate     & \multicolumn{5}{c|}{0.001}                                                                                    \\ \hline
		Final learning rate       & \multicolumn{5}{c|}{1e-5}                                                                                     \\ \hline
	\end{tabular}
\end{table}

\section*{DATA AVAILABILITY}

Data is available from the corresponding author upon reasonable request.

\section*{CODE AVAILABILITY}

The codes used to generate data for this paper are available from the corresponding author upon reasonable request.

\section*{ACKNOWLEDGEMENTS}
This work is supported by the Science and Technology
Development Fund, Macau SAR (0093/2022/A2,
0076/2022/A2, and 0008/2022/AGJ), National Natural Science Foundation of China (62471187)
Guangdong Basic and Applied Basic Research Foundation (Grant Nos. 2025A1515011489, 2022A1515140116). 

\section*{AUTHOR CONTRIBUTIONS}
C.C. conceived the idea, designed the methodology, proved Theorems, implemented the code, and wrote the original draft. Q.Z. performed the simulations and supervised the project. M.Z. established the simulation environment and conducted data analysis and visualization. D.N. developed the code for the circuit model and prepared figures. Z.H. assisted with the simulations and provided the conceptual basis for the proof of Theorem 1. H.S. assisted with the proof of Theorem 2, helped design the posterior state circuit and implement the simulations. All authors contributed to the review and editing of the final manuscript.

\section*{COMPETING INTERESTS}
The authors declare no competing interests.

\onecolumngrid
\bibliography{example_paper}


\newpage

\allowdisplaybreaks
\appendix

\section{Related Works}
\label{section2}
\subsection{Quantum Generative Models}
Quantum generative models \cite{tian2023recent} leverage quantum computing to redesign novel generative frameworks, including quantum generative adversarial networks (QGANs) \cite{lloyd2018quantum, dallaire2018quantum,chakrabarti2019quantum,zoufal2019quantum,situ2020quantum,huang2021quantum,niu2022entangling,silver2023mosaiq,kim2024hamiltonian,ma2025quantum}, quantum circuit Born machines \cite{liu2018differentiable,benedetti2019generative}, quantum variational autoencoders \cite{khoshaman2018quantum,wang2024zeta}, quantum Boltzmann machines \cite{amin2018quantum,zoufal2021variational,coopmans2024sample,minervini2025evolved}, and other approaches \cite{kyriienko2024protocols,wu2024multidimensional}.

\textbf{Quantum Generative Adversarial Networks.} 
The study of QGANs began with foundational works in 2018. Lloyd et al. \cite{lloyd2018quantum} first introduce the theoretical concept, showing that a quantum generator and discriminator could converge to a unique fixed point and potentially offer an exponential advantage over classical methods. Concurrently, Dallaire-Demers et al. \cite{dallaire2018quantum} propose a practical implementation using parameterized quantum circuits and outlined a quantum method for computing the required gradients, demonstrating its feasibility through numerical experiments. Subsequently, Chakrabarti et al. \cite{chakrabarti2019quantum} introduce the Quantum Wasserstein GAN (QWGAN), leveraging a Wasserstein semimetric to enhance robustness and scalability, particularly on noisy hardware. Zoufal et al. \cite{zoufal2019quantum} demonstrate the use of QGANs for learning and loading complex random distributions. Subsequently, Situ et al. \cite{situ2020quantum} design a hybrid quantum-classical model specifically for generating discrete distributions, addressing the known vanishing gradient problem in classical GANs.
In 2021, Huang et al. \cite{huang2021quantum} report the implementation of a QGAN on multiple superconducting qubits, successfully training the network to a Nash equilibrium point where it could replicate data from a training set. Niu et al. \cite{niu2022entangling} introduce the Entangling QGAN (EQ-GAN), a novel architecture that performs entangling operations between the generated and true data to overcome prior limitations and mitigate errors, demonstrating it on a Google Sycamore processor. In 2023, Silver et al. \cite{silver2023mosaiq} propose the MosaiQ framework to address the poor quality and robustness of images generated on NISQ computers. In 2024, Kim et al. \cite{kim2024hamiltonian} introduce Hamiltonian QGANs (HQuGANs), which use quantum optimal control instead of circuits to better align with hardware constraints. Recently, Ma et al. \cite{ma2025quantum} develope a QGAN based on Quantum Implicit Neural Representations (QINR), which demonstrated the ability to generate high-resolution images with a significant reduction in the number of required trainable quantum parameters compared to previous state-of-the-art models.




\textbf{Quantum Circuit Born Machines.}
Foundational work in this area was pioneered by Benedetti et al. \cite{benedetti2019generative}, who propose a versatile quantum circuit learning algorithm for generative tasks and introduced the hardware-independent qBAS score for benchmarking. Around the same time, Liu et al. \cite{liu2018differentiable} address the challenge of intractable likelihoods by introducing an efficient gradient-based training algorithm using a kerneled maximum mean discrepancy loss. Coyle et al. \cite{coyle2020born} demonstrate that using Sinkhorn divergence and Stein discrepancy as cost functions outperforms previous methods and strengthens the case for quantum advantage. The practical scope of these models was expanded by Kiss et al. \cite{kiss2022conditional}, who applied Born machines to generate complex conditional distributions for Monte Carlo simulations in high-energy physics on real quantum hardware. Recently, Gili et al. \cite{gili2023introducing} introduce non-linearity into the model with the Quantum Neuron Born Machine to better learn challenging distributions and later pioneered the evaluation of a QCBM's generalization capability to generate novel, high-quality data \cite{gili2023quantum}. To address the core data loading problem, Li et al. \cite{li2025adaptive} propose an adaptive framework that dynamically grows the circuit ansatz to more efficiently encode complex, real-world data.

\textbf{Quantum Denoising Diffusion Probabilistic Models.} 
Recent advances have reimagined denoising diffusion probabilistic models (DDPMs) in quantum frameworks. Parigi et al. \cite{parigi2024quantum} propose three quantum-noise-driven diffusion models (QNDGDMs) that harness quantum coherence, entanglement, and inherent quantum processor noise to generate complex distributions efficiently, with numerical simulations suggesting potential advantages over classical sampling methods. Zhang et al. \cite{zhang2024generative} introduce the Quantum Denoising Diffusion Probabilistic Model (QuDDPM), which utilizes quantum scrambling for the forward noisy diffusion process and quantum measurement within layered circuits for the backward denoising process, enabling efficient generative learning of quantum data. Chen et al. \cite{chen2024quantum} develop a quantum generative diffusion model (QGDM) that transforms quantum states into completely mixed states via a non-unitary forward process and reconstructs them using a parameter-efficient backward process, achieving 53\% higher fidelity in mixed-state generation than quantum GANs. Francesca et al. \cite{de2024quantuma} demonstrate hybrid quantum-classical diffusion models by embedding variational quantum circuits into classical U-Net architectures (via Quantum ResNet blocks or encoder-level hybridization), showing improved image quality, accelerated convergence, and reduced parameter counts compared to purely classical counterparts. Their subsequent work \cite{de2024quantumb} further propose a quantum latent diffusion model utilizing classical autoencoders and variational circuits, which outperforms classical models in image generation metrics. Kwun et al. \cite{kwun2024mixed} propose a mixed-state quantum denoising diffusion model (MSQuDDPM) that removes the need for scrambling unitaries by integrating depolarizing noise channels and parameterized circuits with projective measurements, overcoming the challenge of implementing high-fidelity random unitary operations in existing methods.

While these quantum adaptations of diffusion models represent a nascent yet promising direction, current implementations remain underexplored for generating classical discrete data, which is a key motivation for our work. Although quantum-classical hybrid approaches exhibit competitive generative performance, they partially underutilize quantum computational advantages. Existing studies have laid the foundational groundwork but lack rigorous theoretical integration of quantum computing principles with diffusion mechanisms, particularly in deriving the exact form of the ground-truth posterior distribution. Our work addresses this gap by establishing a solid theoretical foundation that bridges quantum computation and diffusion models and demonstrating quantum advantage through both theoretical analysis and numerical experiments.

\subsection{Discrete Diffusion Models}
The theoretical foundation for discrete-state diffusion models traces back to Sohl-Dickstein et al. \cite{Sohl2015}, who pioneered binary-variable diffusion processes. 
Hoogeboom et al. \cite{hoogeboom2021argmax} introduce continuous-to-categorical mappings via argmax functions and multinomial noise diffusion for improved text/image segmentation modeling. Austin et al. \cite{austin2021structured} propose Discrete Denoising Diffusion Probabilistic Models (D3PM) 
by designing categorical corruption processes that avoid continuous embeddings and incorporate domain knowledge into transition matrices.
Campbell et al. \cite{campbell2022continuous}  introduce a complete continuous-time framework using Continuous Time Markov Chains for denoising diffusion models on discrete data, enabling efficient training, high-performance sampling, and providing a theoretical bound on the generated distribution error.
Chen et al. \cite{chenanalog} generates discrete data by modeling bits as analog values with self-conditioning and asymmetric time intervals to boost image/caption generation quality.
Santos et al. \cite{santos2023blackout} develop a theoretical framework for diffusion models on discrete-state spaces using arbitrary discrete-state Markov processes for the forward diffusion, deriving the corresponding reverse-time processes and applying the framework to ``Blackout Diffusion\textquotedblright ~for generating samples from an empty state.
Lou et al. \cite{loudiscrete} bridge discrete diffusion with score entropy loss achieving language modeling competitive with GPT-2 and controllable infilling without annealing.
Hayakawa et al. \cite{hayakawa2024distillation} propose Di4C, which addresses the slow sampling issue in discrete diffusion models by introducing a mixture model architecture and novel distillation loss functions that enable student models to capture dimensional correlations and distill multi-step teacher models into few-step samplers while maintaining generation quality.
Liu et al.~\cite{liu2025discrete} identify that the factorized denoising assumption in discrete diffusion models introduces an irreducible total correlation term into the ELBO, and propose Discrete Copula Diffusion (DCD) to incorporate copula-based dependency modeling at inference time, achieving comparable or improved text generation quality with 8 to 32 times fewer denoising steps.

\section{Brief Review of Quantum Conditional States Formalism}
\label{section:Quantum Conditional States Formalism}
In this appendix, we briefly introduce the knowledge of quantum conditional state formalism required to understand our work. To fully comprehend this paper, one should be familiar with at least the concepts introduced below. For a detailed and comprehensive knowledge base, please refer to the original text \cite{leifer2013towards}.
\cref{table: Review Bayesian} outlines the correspondence between classical Bayesian concepts and their quantum counterparts, including states, joint states, marginalization, conditional states, Bayes' theorem, and belief propagation.

\begin{table}[h]
	\centering
	\caption{Comparing the classical Bayesian inference theory \cite{bertsekas2008introduction} with the conditional states formalism in quantum theory \cite{leifer2013towards}.}
	\begin{tabular}{|c|c|c|}
		\hline
		& Classical             & Quantum               \\ \hline
		State                                                          & $P(R)$                     & $\rho_A$                    \\ \hline
		Joint state                                                    & $P(R,S)$                    & $\varrho_{AB}$                    \\ \hline
		Marginalization                                                & $P(S) = \sum_R P(R,S)$                    & $\rho_B = \text{Tr}_A(\varrho_{AB})$                   \\ \hline
		Conditional state                             & $P(S|R)$                  & $\varrho_{B|A}$                    \\
		& $\sum_S P(S|R) = 1$ & $\text{Tr}_B(\varrho_{B|A}) = \boldsymbol{I}_A$ \\ \hline
		Relation between joint  & $P(R,S) = P(S|R)P(R)$                 & $\varrho_{AB} = \varrho_{B|A} \star \rho_A$                 \\
		and conditional states & $P(S|R) = P(R,S)/P(R)$ & $\varrho_{B|A} = \varrho_{AB} \star \rho_A^{-1}$ \\ \hline
		Belief propagation                                             & $P(S) = \sum_R P(S|R)P(R)$                  & $\rho_{B} = \text{Tr}_A(\varrho_{B|A} \rho_A)$                  \\ \hline
		Bayes' theorem                                                 & $P(R|S) = P(S|R)P(R)/P(S)$                  & $\varrho_{A|B} = \varrho_{B|A}\star(\rho_A \rho_B^{-1})$                 \\ \hline
		Bayes conditional  & $P(R|S=s) = \frac{P(S = s|R) P(R)}{P(S = s)}$                  & $\varrho_{A|X=x} = \varrho_{X=x|A}\star(\rho_A \rho_{X=x}^{-1})$                 \\ \hline
	\end{tabular}
    \label{table: Review Bayesian}
\end{table}

\subsection{Basic Concepts}

The conditional state formalism, proposed by Leifer and Spekkens \cite{leifer2013towards}, aims to extend classical Bayesian reasoning into quantum theory. In this framework, classical variables are represented by later letters in the alphabet, such as $ R, S, T, X, Y, Z $, while quantum systems are represented by earlier letters, such as $ A, B, C $.

In conditional state formalism, systems are referred to as ``region\textquotedblright~rather than the traditional ``systems\textquotedblright. A basic region describes a system at a fixed time point, while a composite region consists of multiple disjoint basic regions. For example, the input $ \rho_A $ and output $ \rho_B $ of a quantum channel, which in traditional formalism are treated as states at different time points of the same system, are treated as two independent regions in the conditional state formalism, where a joint region $ \mathcal{H}_{AB} = \mathcal{H}_A \otimes \mathcal{H}_B $ is formed.

In classical Bayesian inference, the joint distribution $ P(R, S) $ describes our knowledge, information, or belief about variables $ R $ and $ S $. Similarly, in the quantum case, the joint state $ \rho_{AB} $ is defined on the Hilbert space $ \mathcal{H}_A \otimes \mathcal{H}_B $ and represents our knowledge about regions $ A $ and $ B $. Quantum marginalization is achieved through the partial trace operation, analogous to summing over variables in classical probability:
\begin{equation}
	\rho_B = \text{Tr}_A(\rho_{AB}), 
\end{equation}
which corresponds to $ P(S) = \sum_R P(R, S) $ in classical probability.

\subsection{Causal Correlations}

The conditional state formalism introduces causality into the quantum state structure. If there is no direct or indirect causal connection between regions $ A $ and $ B $ (for example, if two independent systems are at the same time point), they are causally independent. The joint state $ \rho_{AB} $ represents the regions $ A $ and $ B $ in this case. Both $ \rho_{AB} $ and the conditional states $ \rho_{A|B} $ and $ \rho_{B|A} $ are positive definite. Note that even if regions $ A $ and $ B $ are causally independent, they can still exhibit statistical correlations due to entanglement effects \cite{leifer2013towards}.

Conversely, if there is a direct or indirect causal influence from region $ A $ to region $ B $, such as when the same system is measured at different time points (i.e., $ A $ is the input of a quantum channel and $ B $ is the output), then $ A $ and $ B $ are causally related, as changes in $ \rho_A $ will lead to changes in $ \rho_B $. In this case, the joint state is represented by an operator $ \varrho_{AB} $ on the space $ \mathcal{H}_{AB} = \mathcal{H}_A \otimes \mathcal{H}_B $.


This distinction ensures that dynamical processes (such as those represented by Completely Positive Trace-preserving (CPT) maps in quantum channels) are properly described in the formalism. The operator $ \varrho_{B|A} $ describes the process from $ A $ to $ B $. Specifically, if the dynamics in the traditional formalism are described by a completely positive trace-preserving (CPT) map $ \mathcal{E}_{B|A}: \mathfrak{L}(\mathcal{H}_A) \rightarrow \mathfrak{L}(\mathcal{H}_B) $, the corresponding conditional state $ \varrho_{B|A} $ is related to $ \mathcal{E}_{B|A} $ via the Choi–Jamiołkowski isomorphism \cite{choi1975completely,jamiolkowski1972linear}:
\begin{equation}
	\varrho_{B|A} =  (\mathcal{E}_{B|A^{'}} \otimes \boldsymbol{I}_{A} )(|\tilde{\Phi}^{+}\rangle\langle\tilde{\Phi}^{+}|_{A^{'}A}),
\end{equation} 
where $ \mathcal{H}_{A'} $ is isomorphic to $ \mathcal{H}_A $, and $ |\tilde{\Phi}^{+}\rangle = \sum_{\boldsymbol{i}=0}^{d-1} |\boldsymbol{i}\rangle \otimes |\boldsymbol{i}\rangle $ is the unnormalized maximally entangled state used to simplify the derivations and emphasize the mathematical form. In our work, we use the normalized maximally entangled state to derive the posterior state, ensuring the physical interpretability of the process and results.

The conditional state $ \varrho_{B|A} $ follows quantum belief propagation rules:
\begin{equation}
	\rho_B = \text{Tr}_A(\varrho_{B|A} \rho_A).
\end{equation}
Similar to the classical belief propagation rule:
\begin{equation}
	P(S) = \sum_R P(S|R)P(R).
\end{equation}

The conditional state formalism can also describe the state of a classical system. For a random variable $ X $, the probability distribution $ P(X) $ corresponds to a diagonal density operator:
\begin{equation}
	\rho_X = \sum_x P(X=x) |x\rangle \langle x|_X.
\end{equation}

For a quantum state $ \rho_A $ measured to obtain the random variable $ X $, this process corresponds to a mixed operator:
\begin{equation}
	\varrho_{X|A} = \sum_x |x\rangle \langle x|_X \otimes E_x^{A},
\end{equation}
where the set $ \{E_x^{A}\} $ is a POVM on region $ A $, meaning each $ E_x^{A} $ is positive and $ \sum_x E_x^{A} = \boldsymbol{I} $. The Born rule can be expressed as quantum belief propagation in the conditional state formalism:
\begin{equation}
	\rho_X = \text{Tr}_A(\varrho_{X|A} \rho_A).
\end{equation}

\subsection{Composition of Channels}
In certain cases, we are concerned with the causal state representation under multiple channel operations. Specifically, when $ \rho_A $ undergoes the channel $ \mathcal{E}_{B|A} $ to produce $ \rho_B $, and then $ \rho_B $ undergoes the channel $ \mathcal{E}_{C|B} $, we have a joint channel $ \mathcal{E}_{C|A} = \mathcal{E}_{C|B} \circ \mathcal{E}_{B|A} $. The Choi–Jamiołkowski operators for the three channels are $ \varrho_{C|A} $, $ \varrho_{C|B} $, and $ \varrho_{B|A} $, satisfying:
\begin{equation}
	\varrho_{C|A} = \text{Tr}_B(\varrho_{C|B} \varrho_{B|A}).
\end{equation}

\subsection{Quantum Bayes' Theorem}

We introduce two forms of the quantum Bayes' theorem under conditional formalisms: one for quantum-to-quantum regions and one based on classical measurement outcomes for quantum regions.

For two causally related regions, $ \rho_A $ undergoes the CPT map $ \mathcal{E}_{B|A} $ to produce $ \rho_B $, with the isomorphic operator of the channel $ \mathcal{E}_{B|A} $ given by $ \varrho_{B|A} $. The causal version of the quantum Bayes' theorem is then:
\begin{equation}
	\varrho_{A|B} = \varrho_{B|A} \star (\rho_A \rho_B^{-1}),
\end{equation}
where the $ \star $ product is defined as $ M \star N = N^{\frac{1}{2}} M N^{\frac{1}{2}} $.

For the case where $ \rho_A $ is measured to yield a classical random variable $ X $, the quantum Bayes' theorem is:
\begin{equation}
	\varrho_{A|X} = \varrho_{X|A} \star (\rho_A \rho_X^{-1}).
\end{equation}

\subsection{Quantum Bayesian Conditioning}

The conditional state formalism extends the Bayesian conditional to the quantum domain. In classical Bayesian conditioning, when we focus on a random variable $ R $ and learn the value of a related variable $ X = x $, we update the probability distribution of $ R $ from the prior $ P(R) $ to the posterior $ P(R|X=x) $. Quantum Bayesian conditioning concerns how to update our understanding of causally related regions when we learn the specific value of a related variable. When there is a mixed state $ \rho_{XA} $ and we observe $ X = x $, the quantum state $ \rho_A $ should be updated as:
\begin{equation}
	\rho_{A|X=x} = \varrho_{X=x|A} \star (\rho_A \rho_{X=x}^{-1}).
\end{equation}

\section{More Results}
\label{section: more result}

\begin{figure}[h]
	\centering
	\includegraphics[width=0.5\linewidth]{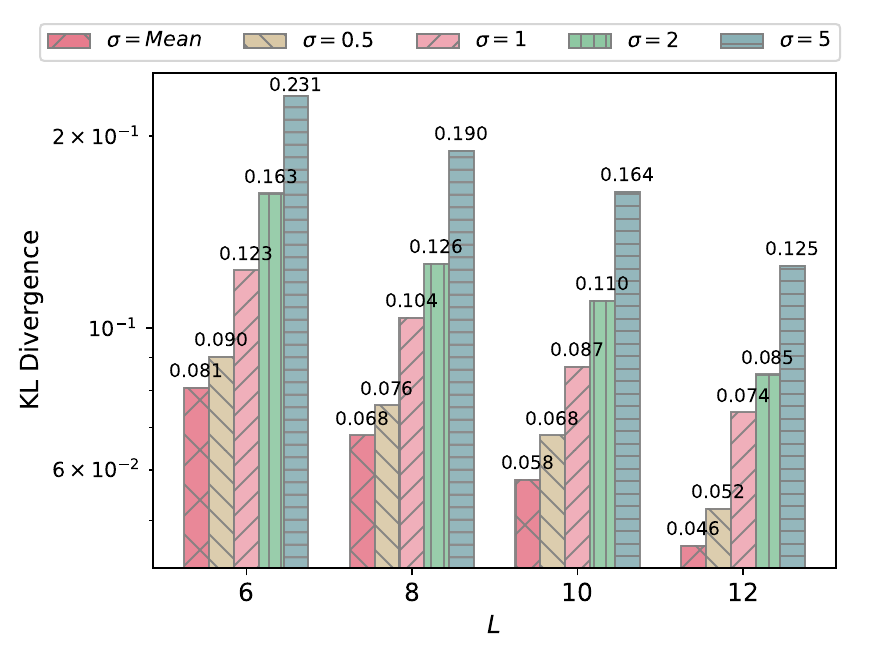}
	\caption{KL divergence for different bandwidths \( \sigma \) and circuit layers \( L \), showing the model's fitting performance relative to the target distribution .}
	\label{fig:4BAS_sigma_L}
\end{figure}

\subsection{Impact of Bandwidth Parameter}
\label{section: Bandwidth}
We evaluate the impact of the bandwidth parameter \( \sigma \) in the MMD Gaussian kernel and the denoising circuit layer \( L \) on the performance of the 4-bit BAS dataset generation. 
The bandwidth \( \sigma \) significantly influences the model's trainability and its ability to fit the underlying data distribution \cite{rudolph2024trainability}. 
We conduct experiments with \( \sigma \) values of \( 0.5, 1, 2, 5 \), and use the mean of multiple bandwidths, i.e., \( \sigma = [0.01, 0.1, 0.25, 0.5, 1.0, 10]N \). Additionally, we set the denoising circuit layer to \( L = 6, 8, 10 \), and 12.

\cref{fig:4BAS_sigma_L} displays the KL divergence after training for different \( \sigma \) and \( L \) settings, measuring the difference between the trained model's generated distribution and the target distribution. As \( \sigma \) increases, the KL divergence rises, indicating worse fitting performance. Among all configurations, using the mean bandwidth yields the best fitting, with the smallest KL divergence. Moreover, increasing the number of quantum circuit layers reduces the KL divergence, improving the model’s fitting accuracy and overall generative performance. This is because a deeper circuit possesses greater expressive power, enabling it to more effectively approximate the target distribution.

\subsection{Results of 4-bit BAS with Other Connectivity}
\label{section: other connectivity}
In the main text of our paper, we assume that the qubits in the denoising circuit are all-to-all connected. However, most near-term quantum computers do not follow this topological connection scheme. In this section, we report the performance of a 4-bit BAS experiment under both chain and star connections. In the chain connection, neighboring qubits are allowed to connect, forming a linear structure like $q1-q2-q3-q4$. In the star connection, the first qubit is connected to all the other qubits, but the other qubits are not interconnected.

\cref{fig:4BAS star loss} shows the trend of the loss function under the star connection in the denoising circuit. \cref{fig:4BAS star generation}(a) illustrates the change in the discrepancy between the model distribution and the target one throughout the training process under the star connection. \cref{fig:4BAS star generation}(b) presents the QD3PM sampling distribution after training with $L=12$. It can be observed that most of the generated samples belong to the BAS category, but there is also a portion of samples of other types, distributed uniformly. This suggests that the star connection topology performs worse than the all-to-all connection, likely due to insufficient information fusion between the qubits encoding the data.

\begin{figure}[H]
	\centering
	\includegraphics[width=0.45\linewidth]{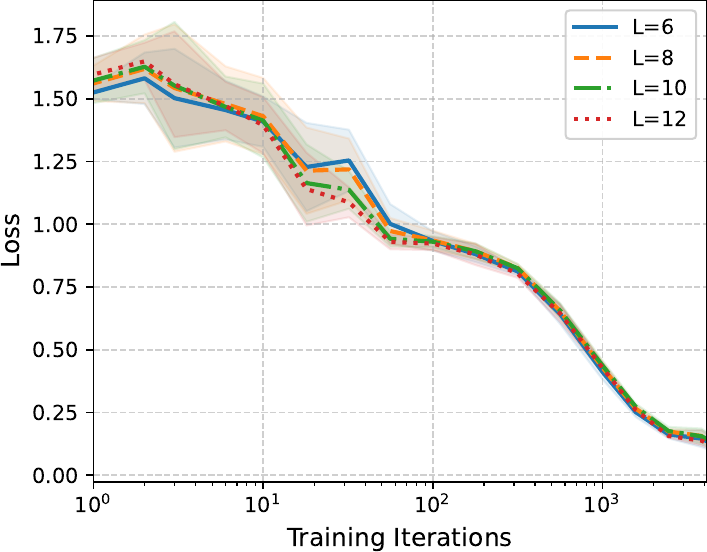}
	\caption{Training loss evolution in 4-bit BAS generation with star connectivity.}
	\label{fig:4BAS star loss}
\end{figure}
\begin{figure}[H]
	\centering
	\includegraphics[width=0.5\linewidth]{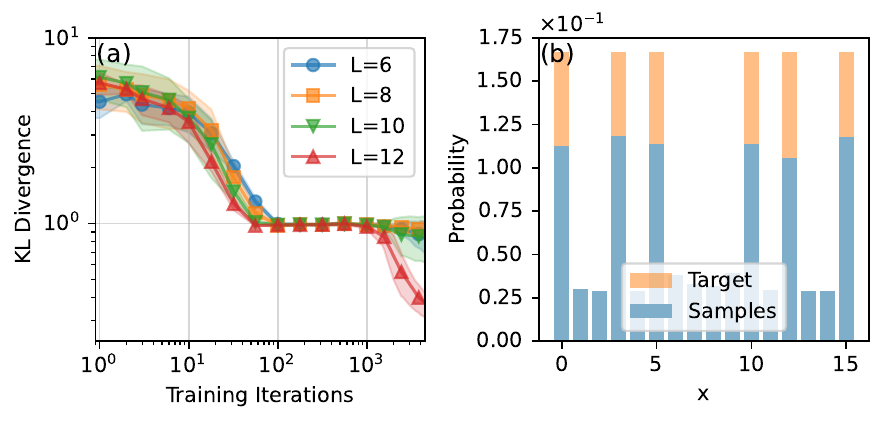}
	\caption{(a) KL divergence, and (b) generated samples for the 4-bit BAS dataset with star connectivity.}
	\label{fig:4BAS star generation}
\end{figure}

\cref{fig:4BAS chain loss} shows the training loss evolution for 4-bit BAS synthesis with chain connectivity. The loss decreases progressively as training iterations increase for all $ L $ values ($ L=6 $, 8, 10, and 12), indicating improved model performance. 
\cref{fig:4BAS chain generation}(a) displays the reduction of KL divergence over training iterations, suggesting that the model's training is improving as the iterations increase. \cref{fig:4BAS chain generation}(b), the probability distribution of the target and the generated samples are compared. The QD3PM with chain connectivity is unable to fully fit the BAS pattern, which may be due to insufficient information exchange between qubits under the chain topology. Designing efficient denoising circuit structures for QD3PM is an interesting and important research direction, which we leave for future work. An effective approach to this problem could be using quantum architecture search algorithms \cite{du2022quantum,situ2024distributed,he2024training,furrutter2024quantum,he2025self,Su2025topology}, which streamline the design process by automating the discovery of optimal circuits. This presents a more scalable and robust alternative to relying on manual, intuition-based design.

\begin{figure}[H]
	\centering
	\includegraphics[width=0.45\linewidth]{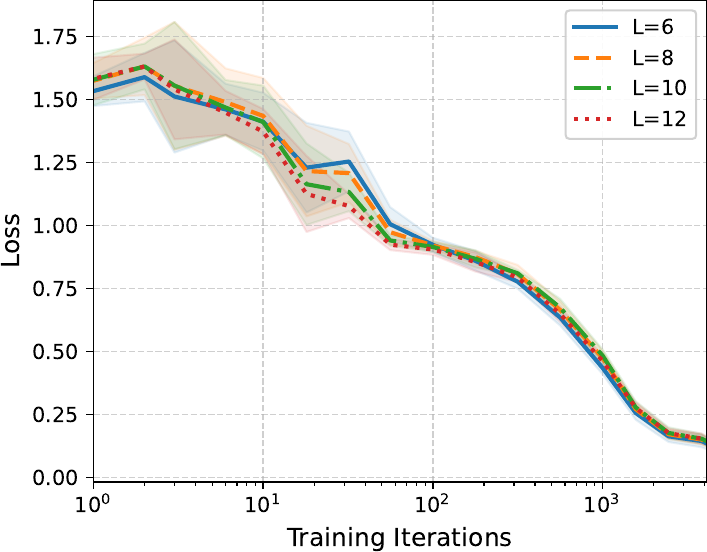}
	\caption{Training loss evolution in 4-bit BAS synthesis with chain connectivity.}
	\label{fig:4BAS chain loss}
\end{figure}
\begin{figure}[H]
	\centering
	\includegraphics[width=0.5\linewidth]{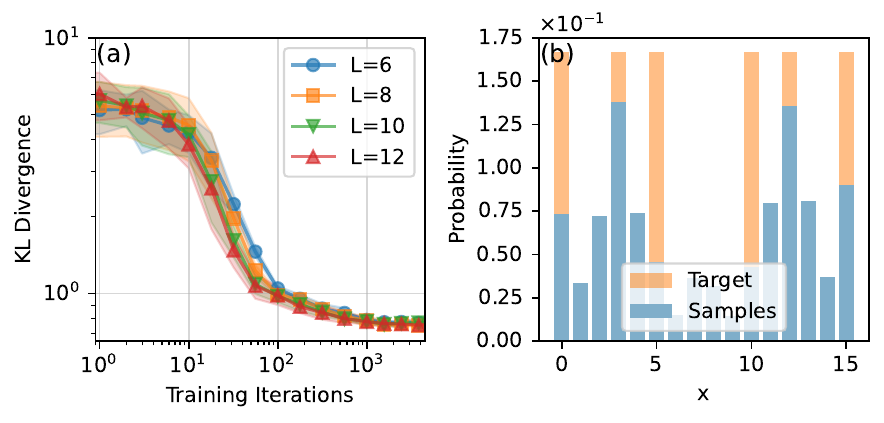}
	\caption{(a) KL divergence, and (b) generated samples for the 4-bit BAS dataset with chain connectivity.}
	\label{fig:4BAS chain generation}
\end{figure}

\subsection{Training Loss and Generation Dynamics}

\subsubsection{Training the Circuit for Modeling $p_{\boldsymbol{\theta}}(\boldsymbol{x}_{t-1}|\boldsymbol{x}_t)$}

\begin{figure}[H]
	\centering
	\subfloat[$N=4$\label{subfig:a}]{
		\includegraphics[width=0.18\textwidth]{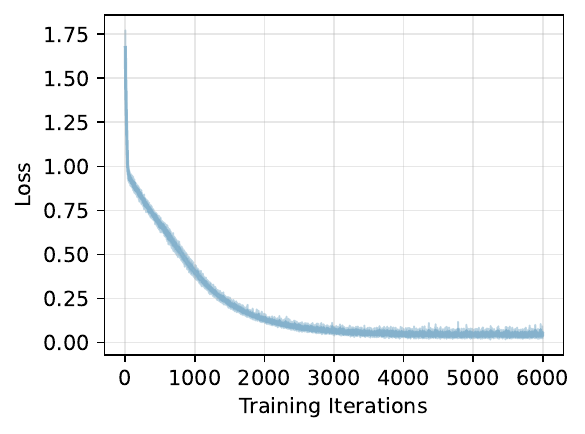}
	} \hfill
	\subfloat[$N=6$\label{subfig:b}]{
		\includegraphics[width=0.18\textwidth]{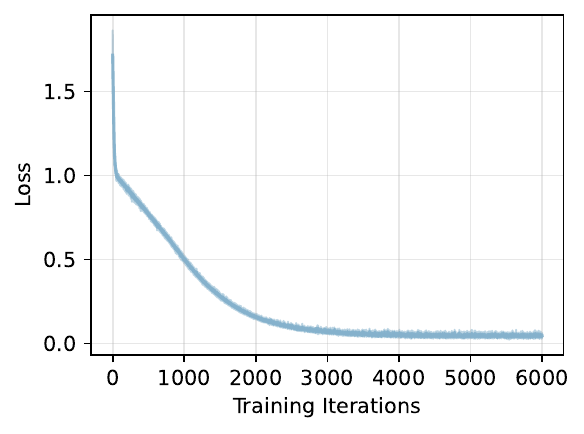}
	} \hfill
	\subfloat[$N=8$\label{subfig:c}]{
		\includegraphics[width=0.18\textwidth]{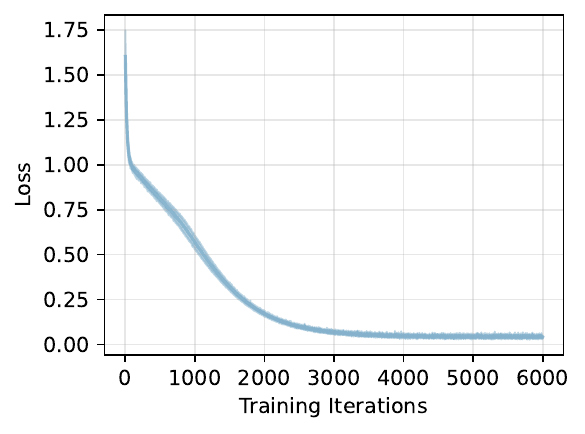}
	} \hfill
	\subfloat[$N=9$\label{subfig:d}]{
		\includegraphics[width=0.18\textwidth]{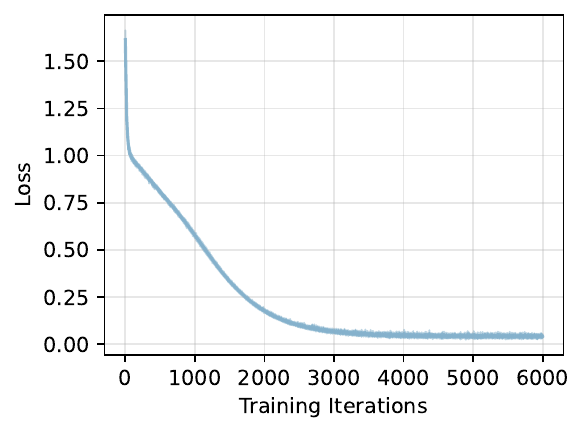}
	} \hfill
	\subfloat[$N=10$\label{subfig:e}]{
		\includegraphics[width=0.18\textwidth]{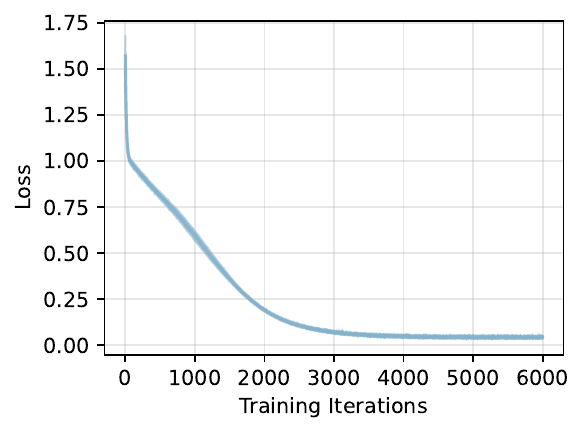}
	}
	\caption{Training loss evolution over iterations for fitting Mixed Gaussian datasets with varying qubit numbers $N$.}
	\label{fig: loss gausssian}
\end{figure}

\begin{figure}[H]
	\centering
	\subfloat[$N=4$\label{subfig:a}]{
		\includegraphics[width=0.18\textwidth]{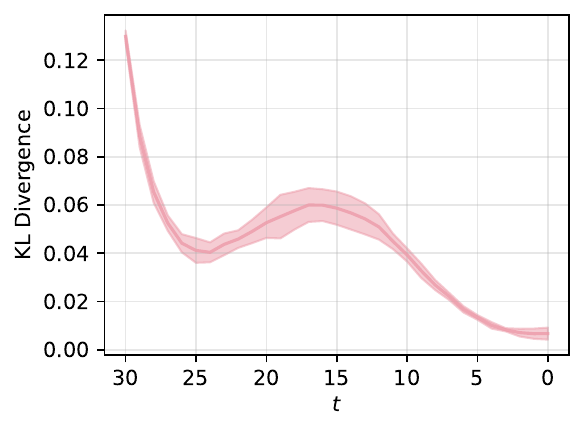}
	} \hfill
	\subfloat[$N=6$\label{subfig:b}]{
		\includegraphics[width=0.18\textwidth]{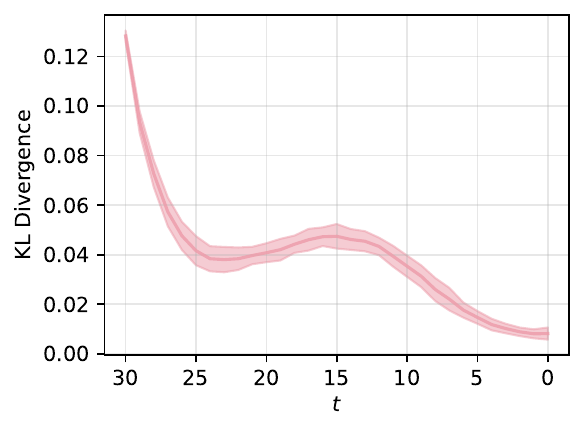}
	} \hfill
	\subfloat[$N=8$\label{subfig:c}]{
		\includegraphics[width=0.18\textwidth]{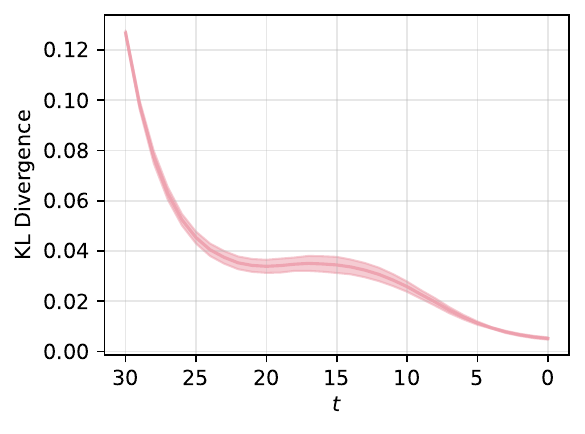}
	} \hfill
	\subfloat[$N=9$\label{subfig:d}]{
		\includegraphics[width=0.18\textwidth]{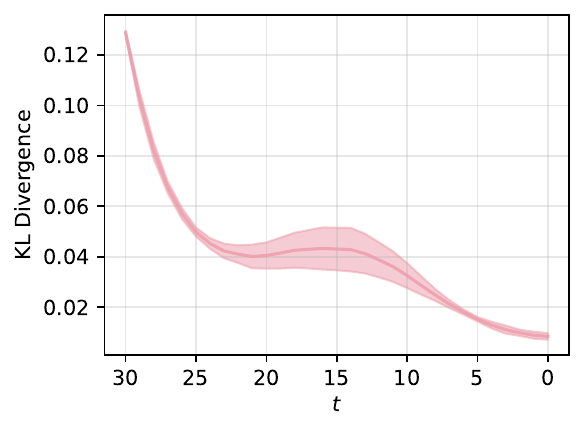}
	} \hfill
	\subfloat[$N=10$\label{subfig:e}]{
		\includegraphics[width=0.18\textwidth]{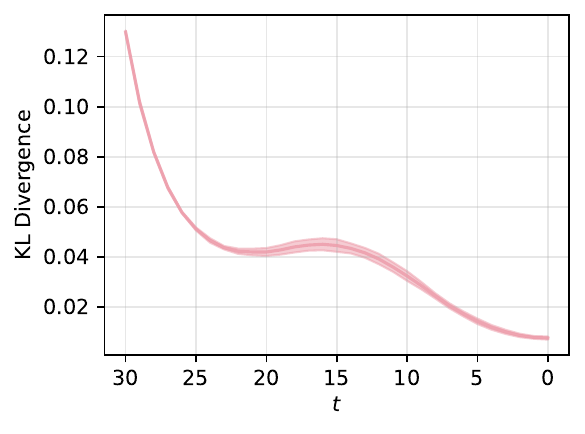}
	}
	\caption{The change in KL divergence between the generated sample distribution and the Mixed Gaussian dataset distribution as timesteps $t$ decrease, after model training.}
	\label{fig: generation kl gaussian}
\end{figure}


\begin{figure}[H]
	\centering
	\subfloat[$N=4$\label{subfig:a}]{
		\includegraphics[width=0.18\textwidth]{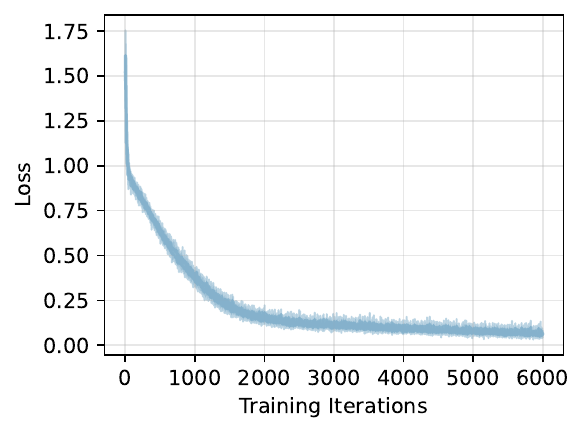}
	} \hfill
	\subfloat[$N=6$\label{subfig:b}]{
		\includegraphics[width=0.18\textwidth]{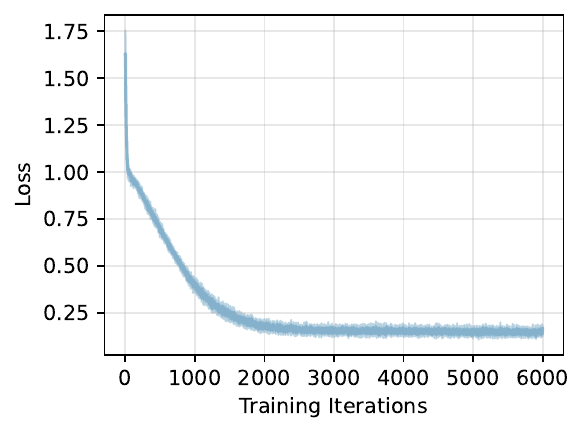}
	} \hfill
	\subfloat[$N=8$\label{subfig:c}]{
		\includegraphics[width=0.18\textwidth]{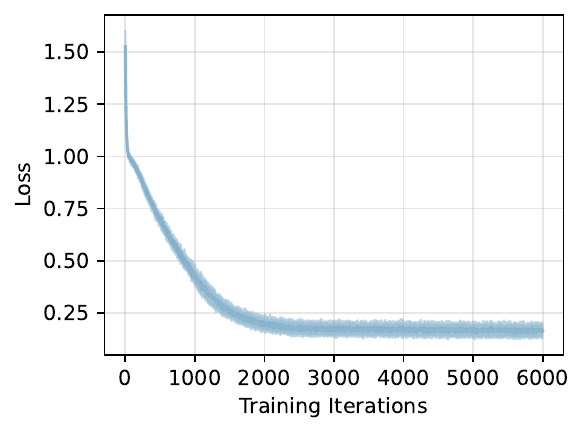}
	} \hfill
	\subfloat[$N=9$\label{subfig:d}]{
		\includegraphics[width=0.18\textwidth]{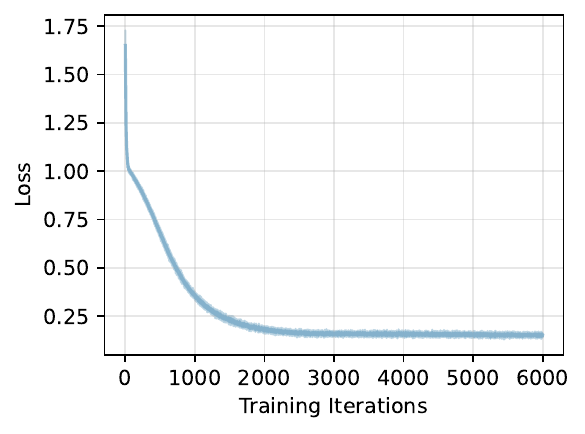}
	} \hfill
	\subfloat[$N=10$\label{subfig:e}]{
		\includegraphics[width=0.18\textwidth]{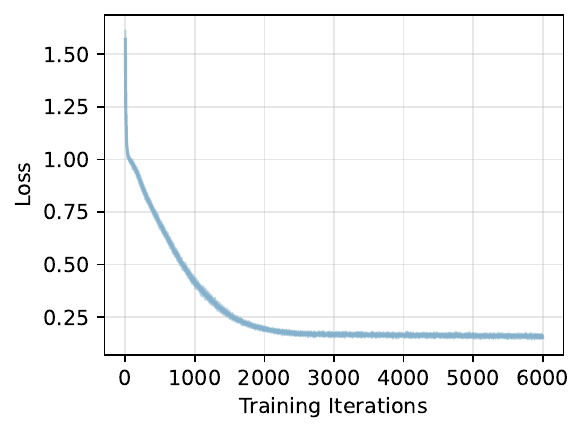}
	}
	\caption{Training loss evolution over iterations for fitting BAS datasets with varying qubit numbers $N$.}
	\label{fig: loss bas}
\end{figure}

\begin{figure}[H]
	\centering
	\subfloat[$N=4$\label{subfig:a}]{
		\includegraphics[width=0.18\textwidth]{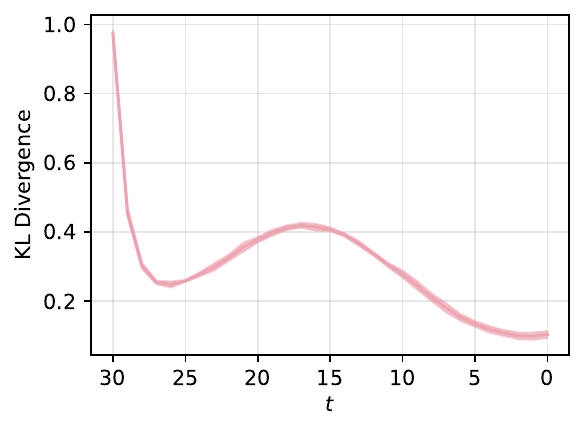}
	} \hfill
	\subfloat[$N=6$\label{subfig:b}]{
		\includegraphics[width=0.18\textwidth]{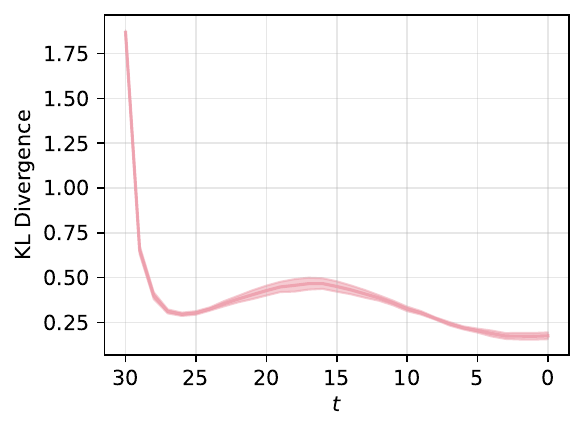}
	} \hfill
	\subfloat[$N=8$\label{subfig:c}]{
		\includegraphics[width=0.18\textwidth]{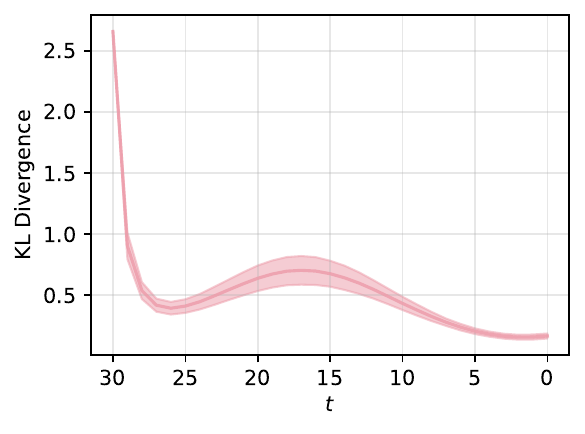}
	} \hfill
	\subfloat[$N=9$\label{subfig:d}]{
		\includegraphics[width=0.18\textwidth]{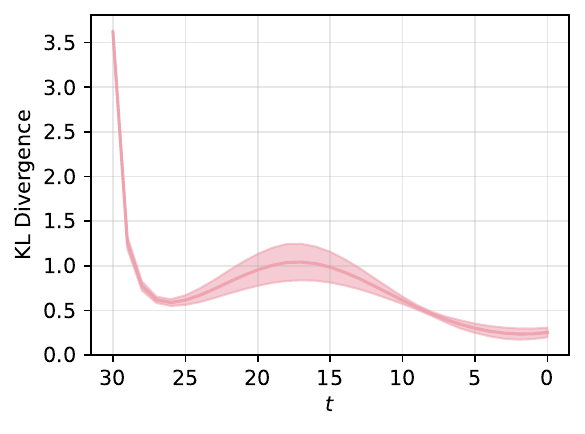}
	} \hfill
	\subfloat[$N=10$\label{subfig:e}]{
		\includegraphics[width=0.18\textwidth]{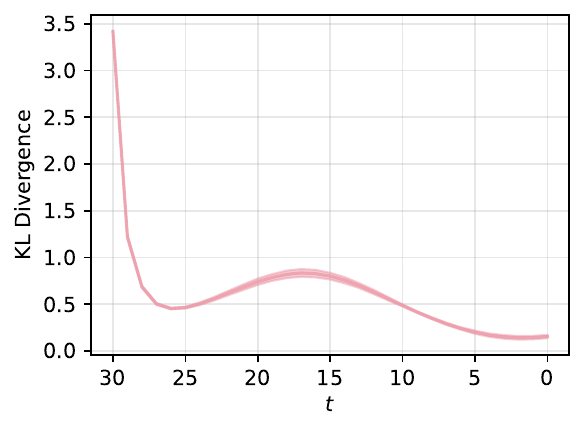}
	}
	\caption{The change in KL divergence between the generated sample distribution and the BAS dataset distribution as timesteps $t$ decrease, after model training.}
	\label{fig: generation kl bas}
\end{figure}

\subsubsection{Training the Circuit for Modeling $p_{\boldsymbol{\theta}}(\boldsymbol{x}_{0}|\boldsymbol{x}_t)$}

\begin{figure}[H]
	\centering
	\subfloat[$N=4$\label{subfig:a}]{
		\includegraphics[width=0.18\textwidth]{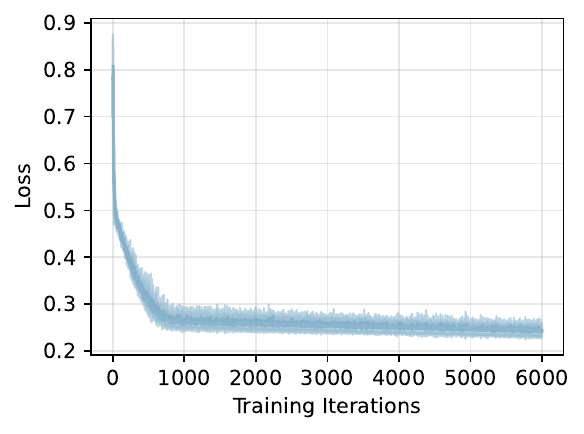}
	} \hfill
	\subfloat[$N=6$\label{subfig:b}]{
		\includegraphics[width=0.18\textwidth]{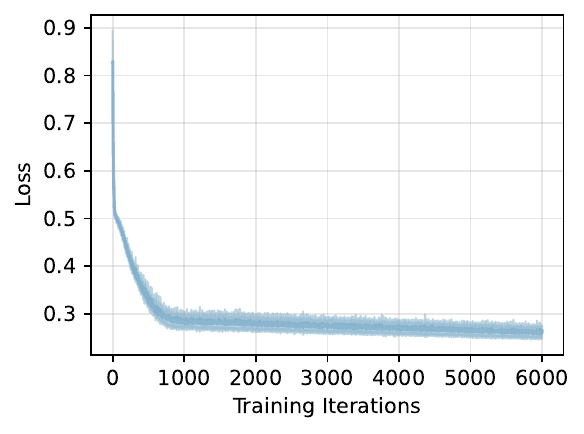}
	} \hfill
	\subfloat[$N=8$\label{subfig:c}]{
		\includegraphics[width=0.18\textwidth]{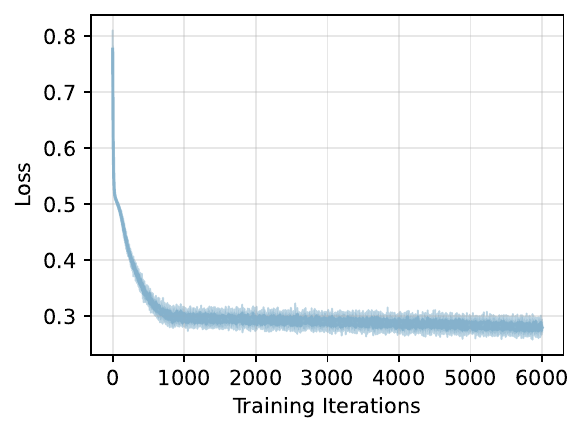}
	} \hfill
	\subfloat[$N=9$\label{subfig:d}]{
		\includegraphics[width=0.18\textwidth]{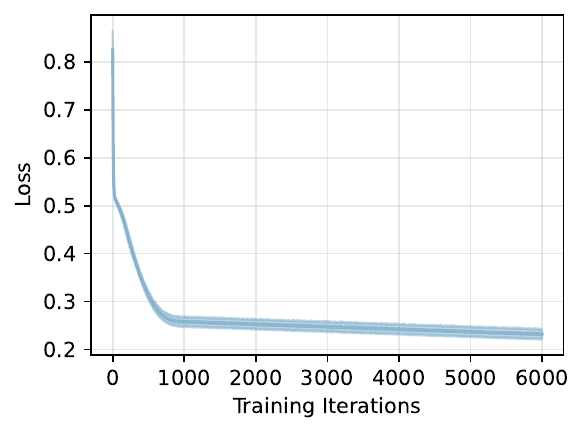}
	} \hfill
	\subfloat[$N=10$\label{subfig:e}]{
		\includegraphics[width=0.18\textwidth]{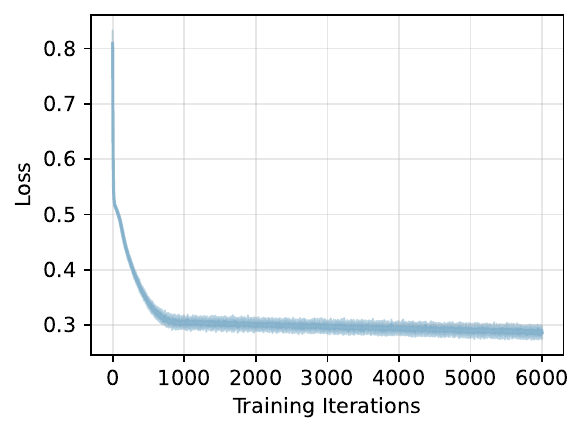}
	}
	\caption{The change in KL divergence between the generated sample distribution and the BAS dataset distribution as timesteps $t$ decrease, after model training.}
	\label{fig: generation kl bas}
\end{figure}

\end{document}